\newtheorem{thm}{Theorem}[section]
\newtheorem{lem}[thm]{Lemma}
\newtheorem{proposition}[thm]{Proposition}
\theoremstyle{definition}
\theoremstyle{remark}
\numberwithin{thm}{section}
\DeclareMathOperator{\RE}{Re}
\DeclareMathOperator{\IM}{Im}
\newcommand{\R}{{\mathord{\mathbb R}}}
\newcommand{\N}{{\mathord{\mathbb N}}}
\newcommand{\I}{\mathcal{I}}
\newcommand{\W}{\mathcal{W}}
\newcommand{\X}{\mathcal{X}}
\newcommand{\C}{{\mathord{\mathbb C}}}
\newcommand{\Z}{{\mathord{\mathbb Z}}}
\newcommand{\E}{{\mathord{\mathbb E}}}
\def\idty{{\mathchoice {\mathrm{1\mskip-4mu l}} {\mathrm{1\mskip-4mu l}} %
{\mathrm{1\mskip-4.5mu l}} {\mathrm{1\mskip-5mu l}}}}
\DeclareMathOperator{\tr}{tr}
\DeclareMathOperator{\supp}{supp}
\DeclareMathOperator{\diag}{diag}
\DeclareMathOperator{\Ran}{Ran}
\begin{document}

\title[On Localized Excitations]{On the Regime of Localized Excitations \\
for Disordered Oscillator Systems}

\author[H. Abdul-Rahman]{Houssam Abdul-Rahman}
\address{Department of Mathematics\\
University of Arizona\\
Tucson, AZ 85721, USA}
\email{houssam@math.arizona.edu}
\author[R. Sims]{Robert Sims}
\address{Department of Mathematics\\
University of Arizona\\
Tucson, AZ 85721, USA}
\email{rsims@math.arizona.edu}
\author[G. Stolz]{G\"unter Stolz}
\address{Department of Mathematics\\
University of Alabama at Birmingham\\
Birmingham, AL 35294 USA}
\email{stolz@uab.edu}

\date{\today}

\begin{abstract}

We study quantum oscillator lattice systems with disorder, in arbitrary dimension, requiring only partial localization of the associated effective one-particle Hamiltonian. This leads to a many-body localized regime of excited states with arbitrarily large energy density. We prove zero-velocity Lieb-Robinson bounds for the dynamics of Weyl operators as well as for position and momentum operators restricted to this regime. Dynamical localization is also shown in the form of quasi-locality of the time evolution of local Weyl operators and through exponential clustering of the dynamic correlations of states with localized excitations. 

\end{abstract}

\maketitle

%%%%%
%
%  Intro . . .
%
%%%%%%%

\allowdisplaybreaks

\section{Introduction}

Systems of interacting quantum harmonic oscillators are one of the simplest toy-models in which many-body localization (MBL) effects due to disorder can be rigorously studied. They may be considered as a counterpart to the XY spin chain in that the former can be mapped to a free lattice boson system while the latter maps to a free lattice fermion system. One interesting feature of oscillator systems is that this mapping works in arbitrary dimension (while the required Jordan-Wigner transform for the XY chain is limited to dimension one). Another conceptional difference is that the local Hilbert space in oscillator systems is infinite-dimensional, requiring technical adjustments to the finite-dimension spin systems case (such as working on the Weyl algebra of local observables). In both models, however,  many-body localization properties can be studied through an effective one-particle Hamiltonian, with one of the tasks consisting in properly relating one-particle and many-particle concepts.

We refer to the survey \cite{ANSS} for results on the disordered XY chain. Another relatively simple model where the emergence of certain MBL-type properties can be studied via reduction to an effective one-body Hamiltonian is the Tonks-Girardeau gas \cite{SW}. We also mention the disordered Holstein model, describing an Anderson model coupled to a system of free bosons via a tracer particle, where localization properties have bee proven in \cite{MaviSchenker}. This model does not reduce to a one-particle operator and requires a more thorough and original fractional moment analysis of correlations between the series of bands arising from the bosonic modes.

Oscillator lattice systems are the standard model for phonons, the vibrational degrees of freedom in crystal lattices. The importance of disorder effects on such systems has long been realized, see the classical works \cite{Rieder1967,MatsudaIshii,CasherLebowitz} for first studies of the non-interacting case. Interacting oscillator systems have been studied more recently, initially under the assumption of a stable ground state gap, see \cite{CE,SCW,CSE,NRSS,Nourr} for bounds on transport in the form of Lieb-Robinson estimates as well as results on exponential clustering of ground state correlations. There were also multiple studies of entanglement bounds in gapped oscillator systems, a topic which we will not discuss here, so we point to \cite{NSS2} and, in particular, its bibliography for more references.

One of the key properties of the disordered oscillator systems studied here is that they do not have a stable gap. The central question could be phrased as asking if these systems still have a `mobility gap' which leads to localization properties.

We will study a $\nu$-dimensional disordered oscillator system for which the effective Hamiltonian is the $\nu$-dimensional Anderson model, with arbitrary $\nu$. Much of the prior work on this model (e.g.\ \cite{NSS1}, \cite{NSS2},  \cite{ARSS17} and \cite{AR18}, which we will compare with our new results in more detail below) has focused on the case of large disorder where the effective Anderson model is fully localized. 
However, in dimension $\nu\ge 2$ and for small disorder, the Anderson model is only known to be localized near the spectral edges and, at least for $\nu \ge 3$, expected to have an extended states regime near the center of the spectrum.

It is thus natural to ask if and how such a partially localized regime of the effective Hamiltonian leads to a corresponding many-body localization regime for the oscillator system. Describing this regime, which we will refer to as the {\it regime of localized excitations}, is our main goal here. For this we have to study MBL properties of the corresponding excited states of the oscillator system, going beyond earlier works which generally considered the ground state or thermal states (e.g., in the form of rapid decay of correlations or small entanglement). Our results for excited states are new even for fully localized systems.

All our results will also account for {\it dynamical} many-body localization properties. Generally, this is done by studying the evolution of local observables under the Heisenberg dynamics of the system. That the support of these observables remains essentially bounded for all times, up to exponentially small tails, can be expressed in the form of zero-velocity Lieb-Robinson bounds or, more directly, by the possibility to closely approximate the time-evolved observable with a strictly local observable (referred to in the following as {\it quasi-locality} of the time evolution).

That our proof of localization for this many-body system is valid only in the regime of localized excitations (in particular, it does not exclude many-body delocalized states in other parts of the Hilbert space) will be reflected by considering the Heisenberg dynamics of restricted observables, i.e., two-sided projected onto the subspace of localized excitations. This is similar to recent work on dynamical localization of the droplet spectrum in the disordered XXZ spin chain \cite{EKS, EKS2}, where the Heisenberg dynamics of observables restricted to the droplet regime was studied (see also \cite{BW} for closely related results).

In Section~\ref{sec:model} we will introduce the model and the assumptions we make on the disorder, reviewing, in particular, the localization properties of the Anderson model which we will need. We then, in Section~\ref{sec:freebosons}, recall the reduction of oscillator systems to a free boson system and, in this context, introduce the regime of localized excitations, corresponding to localized states of the effective Hamiltonian.

Section~\ref{sec:prelim} provides some important tools: In Section~\ref{sec:Weyl} we discuss the Weyl operators and their restriction to the regime of localized excitations (some more background on Weyl operators is collected in Appendix~\ref{sec:Weylops}). In Section~\ref{sec:effdyn} we show how one-particle localization of the effective Hamiltonian can be restated as localization of the effective dynamics of the Weyl operators. This provides the crucial link between one-body dynamics and many-body dynamics in our work.

In Sections~\ref{sec:LRBs}, \ref{sec:quasiloc} and \ref{sec:cordecay} we state and prove our three main results. We start with a zero-velocity Lieb-Robinson bound for the Heisenberg evolution of observables restricted to the regime of localized excitations, both for the Weyl operators and for local position and momentum operators, see Section~\ref{sec:LRBs}. This is followed in Section~\ref{sec:quasiloc} by establishing quasi-locality of the evolution of the Weyl operators, again using restriction to localized excitations. Note that, due to the restrictions on the observables under consideration, there is no obvious equivalence between 
Lieb-Robinson bounds and the corresponding quasi-locality estimates. Finally, we prove exponential decay of dynamic correlations for the Weyl operators in eigenstates with localized excitations. This is the content of Section~\ref{sec:cordecay}. 

We include two additional appendices. In Appendix~\ref{sec:nondeg} we show that the disordered oscillator systems considered here almost surely have non-degenerate spectrum. We use this in Sections~\ref{sec:quasiloc} and \ref{sec:cordecay}, but also consider this fact and its proof as being of some independent interest. In Appendix~\ref{App:energydensity} we show that the regime of localized excitations (for which MBL is established here) is extensive in energy, i.e., it allows for states with arbitrarily large positive energy density. In fact, we give an explicit formula for the maximal energy density (at any fixed number of excitations per site) in terms of the density of states of the infinite volume Anderson model. 

At this point it remains an open problem to prove an area law for the bipartite entanglement entropy of states in the regime of localized excitations. For the ground state and thermal states this was accomplished in \cite{NSS2}, assuming a fully localized system (e.g.\ large disorder for the Anderson model). A difficulty in extending this result to excitations is that excited states in oscillator systems are no longer quasi-free (as is the case for the XY chain, a fact used in \cite{ARS15} to prove an area law for the disordered XY chain, uniformly for all eigenstates). However, \cite{AR18} has identified a class of mixed non-quasi-free states in disordered oscillator systems where an area law in terms of the disorder-averaged logarithmic negativity can be shown. In forthcoming work we will address related results concerning area laws for the dynamic entanglement of a large class of states under quantum quenches (comparable to a result in \cite{ARNSS} for the XY chain).

\vspace{.5cm}

\noindent {\bf Acknowledgments:} G.~S.\ gratefully acknowledges hospitality and support at the Centre de Recherches Math\'ematiques of the Universit\'e de Montr\'eal, where part of this work was done during the Thematic Semester on Mathematical challenges in many-body physics and quantum information.

\section{Disordered quantum oscillator systems} \label{sec:oscsystems}

\subsection{Model and assumptions} \label{sec:model}

For any dimension $\nu\geq 1$, we consider harmonic oscillator systems over arbitrary finite rectangular boxes $\Lambda:= [a_1,b_1]\times\ldots\times[a_\nu,b_\nu] \subset \Z^\nu$, with $\ell^1$-distance $|\cdot|$, given by the Hamiltonian
\begin{eqnarray} \label{Hamiltonian}
H_\Lambda & = & \sum_{x\in\Lambda}(p_x^2+k_x q_x^2) + \sum_{x,y\in \Lambda, |x-y|=1} (q_x-q_y)^2\\
& = & \sum_{x\in\Lambda} p_x^2 + q^T h_{\Lambda} q \notag
\end{eqnarray}
in the Hilbert space
\begin{equation} \label{Hilbertspace}
\mathcal{H}_\Lambda=\bigotimes_{x\in\Lambda}\mathcal{L}^2(\mathbb{R})=\mathcal{L}^2(\mathbb{R}^\Lambda).
\end{equation}
Here  $q_x$ and $p_x=-i\partial/\partial q_x$ are the position and momentum operators at the sites $x\in \Lambda$, and $q=(q_1,q_2,\ldots,q_\Lambda)^T$ is viewed as a column vector (and thus $q^T=(q_1,\ldots,q_\Lambda)$ a row vector). The effective Hamiltonian of the system is
\begin{equation} \label{effHam}
h_\Lambda = h_{0,\Lambda} + k,
\end{equation}
where $h_{0,\Lambda}$ is the (non-negative semi-definite) graph Laplacian
\begin{equation} \label{graphLap}
(h_{0,\Lambda} f)(x) := \sum_{y\in \Lambda:|x-y|=1} (f(x)-f(y))
\end{equation}
on $\ell^2(\Lambda)$ and $k$ the multiplication operator by a positive potential $k:\Lambda \to (0,\infty)$. This means that $h_\Lambda$ is strictly positive definite, thus assuring positivity and discreteness of the spectrum of $H_\Lambda$ by standard results, e.g.\ \cite{RS2}. Also, $\|h_{\Lambda}\| \le 4\nu + \|k\|_{\infty}$.

For our results below we assume that
\begin{equation} \label{iid}
(k_x)_{x\in\Lambda} \: \mbox{are i.i.d.\ random variables},
\end{equation}
so that $h_\Lambda$ is the $\nu$-dimensional finite volume Anderson model. We further assume that the random variables
\begin{equation} \label{density}
\begin{array}{l} k_x, \:x\in \Lambda,\: \mbox{have a.c.\ distribution $d\mu(k_x)=\rho(k_x)dk_x$}, \\ \mbox{with bounded density $\rho$, supported on} \,\:[0,k_{max}]. \end{array}
\end{equation}

The model and assumptions \eqref{Hamiltonian}, \eqref{effHam}, \eqref{iid} and \eqref{density} will remain fixed throughout this work and all our results below refer to this situation. By $\E(\cdot)$ we will denote averaging over these random variables, i.e., with respect to the product measure $d\mathbb{P} = \prod_{x\in \Lambda} d\mu$ on $\R^{\Lambda}$.

By \eqref{effHam} and \eqref{density} we have the almost sure norm bound $\|h_\Lambda\| \le  4\nu + k_{max}$, uniformly in $\Lambda$. Note that, while $h_\Lambda$ is almost surely strictly positive definite, allowing the support of the random variables $k_x$ to contain $0$ means that $\inf \sigma(h_\Lambda)$ does not have a deterministic and $|\Lambda|$-independent positive lower bound, so that $h_\Lambda^{-1}$ almost surely exists but does not have a deterministic upper norm bound. A consequence of this is that the many-body Hamiltonian $H_\Lambda$ does not have a stable ground state gap, compare \eqref{eigenvalues} below.

Absolute continuity of the $k_x$ implies that
\begin{equation} \label{simplespec}
h_\Lambda \: \mbox{and} \:\: H_\Lambda \:\: \mbox{almost surely have simple spectrum},
\end{equation}
i.e., all their eigenvalues are non-degenerate. While this is well known for the one-body operator $h_\Lambda$, we provide a proof of the non-degeneracy of the many-body operator $H_\Lambda$ in Appendix~\ref{sec:nondeg}. We will use these properties below.  In the case of the many-body Hamiltonian we also consider this fact and its proof as being of some independent interest.

All our results below will be based on the following form of localization for the Anderson model $h_\Lambda$: There are constants $\lambda_0>0$, $C<\infty$ and $\mu>0$, independent of $\Lambda$, such that
\begin{equation}\label{def:eig-corr}
\mathbb{E}\left(\sup_{|u|\leq 1} \left|\langle\delta_x,h_{\Lambda}^{-1/2} u(h_\Lambda)\chi_{[0,\lambda_0]}(h_{\Lambda})\delta_y\rangle \right|\right)\leq C e^{-\mu|x-y|}
\end{equation}
for all $x, y \in \Lambda$. Here $\chi_{[0,\lambda_0]}(h_\Lambda)$ refers to the spectral projection for $h_\Lambda$ onto $[0,\lambda_0]$ and the supremum is over all Borel functions with pointwise bound $|u|\le 1$, with $u(h_\Lambda)$ and $h_{\Lambda}^{-1/2}$ defined via the functional calculus. Without the `singular' factor $h_{\Lambda}^{-1/2}$ the bound \eqref{def:eig-corr} is generally referred to as localization of eigenfunction correlators and well known to hold under our assumptions, in any dimension $\nu$ and on a non-trivial interval $[0,\lambda_0]$. The stronger localization bound \eqref{def:eig-corr} was shown in Appendix~A of \cite{NSS1}, also in arbitrary dimension and under assumption \eqref{density}. 

We mention that all of our results could be extended to more general disordered oscillator systems than \eqref{Hamiltonian}, e.g.\ with random masses weighing the kinetic energies $p_x^2$ or random couplings at the interactions $(q_x-q_y)^2$, as long as localization of the effective Hamiltonian in the form \eqref{def:eig-corr} can be verified. Limiting our discussion to the case of random spring constants $k_x$ is mostly due to the fact that this can most easily be referenced for the Anderson model (with disordered potential).

\subsection{Free boson systems and the regime of localized excitations} \label{sec:freebosons}

Here we recall the basic concepts behind the reduction of interacting harmonic oscillators to free boson systems. In particular, this identifies $h_\Lambda$ as the effective one-particle Hamiltonian governing the many-body system $H_\Lambda$.
In this language we will then introduce the subspace of $\mathcal{H}_\Lambda$ which will turn out to describe the many-body localized phase.

Denote by $0<\gamma_1 \le \gamma_2 \le ... \le \gamma_{|\Lambda|}$ the positive square roots of the eigenvalues $\gamma_j^2$ of $h_\Lambda=h_{0,\Lambda}+k$, in non-decreasing order and counted with multiplicity. By elementary perturbation theory of real hermitean matrices, e.g.\ \cite{Kato}, the $\gamma_j$ are continuous in $k\in (0,\infty)^{\Lambda}$. One may also choose a corresponding orthonormal basis of real eigenvectors $\varphi_j$ of $h_\Lambda$, $1\le j \le |\Lambda|$, which depends continuously on $k$. Thus the orthogonal mapping $\mathcal{O}: \R^{|\Lambda|} \to \R^{\Lambda}$ given by $(\mathcal{O} v)(x) = \sum_j \varphi_j(x) v_j$, $x\in \Lambda$, diagonalizes $h_\Lambda$,
\begin{equation} \label{diagh}
{\mathcal O}^T h_\Lambda {\mathcal O} = \gamma^2.
\end{equation}
where $\gamma = \diag(\gamma_j)$ and $({\mathcal O}^T f)(j)= \sum_{x\in \Lambda} f(x) \varphi_j(x)$. With this one defines the operator-valued column vector $b=(b_1, \ldots, b_{|\Lambda|})^T$ as
\begin{equation} \label{bfromqp}
b = \frac{1}{\sqrt{2}} (\gamma^{1/2} {\mathcal O}^T q + i \gamma^{-1/2} {\mathcal O}^T p)
\end{equation}
or, in more detail,
\begin{equation} \label{bfromqp2}
b_j = \frac{1}{\sqrt{2}} \sum_{x\in \Lambda} \varphi_j(x) (\gamma_j^{1/2} q_x +i\gamma_j^{-1/2} p_x), \quad 1 \le j \le |\Lambda|.
\end{equation}
These operators and their adjoints satisfy the canonical commutation relations (CCR)
\begin{equation}\label{b:CCR}
[b_j,b_m]=[b_j^*,b_m^*]=0, \text{ and }[b_j,b_m^*]=\delta_{j,m}\ \idty, \:\: 1\le j,m \le |\Lambda|
\end{equation}
and allow to rewrite $H_\Lambda$ as a free boson system \cite{NSS1},
\begin{equation}
H_\Lambda=\sum_{j} \gamma_j(2b_j^* b_j+\idty).
\end{equation}
This means that a complete set of eigenvectors $\psi_\alpha$, $\alpha=(\alpha_1, \ldots, \alpha_{|\Lambda|})\in\mathbb{N}_0^{|\Lambda|}$, of $H_\Lambda$ is given as
\begin{equation} \label{eigenvectors}
\psi_\alpha=\prod_{j=1}^{|\Lambda|} \frac{1}{\sqrt{\alpha_j!}}(b_j^*)^{\alpha_j}\psi_0,
\end{equation}
where  $\psi_0$ is the non-degenerate normalized ground state of $H_\Lambda$, characterized by $b_j \psi_0=0$ for all $j$, and $\alpha\in\mathbb{N}_0^{|\Lambda|}$ is called the occupation number vector. The corresponding eigenvalues of $H_\Lambda$ are
\begin{equation} \label{eigenvalues}
E_\alpha=\sum_{j=1}^{|\Lambda|} \gamma_j(2\alpha_j+1).
\end{equation}

Continuity of the one-body eigenvalues $\gamma_j$ in $k$ implies continuity of the many-body eigenvalues $E_\alpha$ in $k$. One can also check that the many-body eigenvectors $\psi_\alpha$ in $\mathcal{H}_\Lambda$ are weakly continuous in $k$ if one chooses the unique {\it positive} ground state of $H_\Lambda$ as vacuum vector $\psi_0$. (For $\psi_0$ this can be seen from its explicit characterization as $\psi_0(x) = \phi_0(\mathcal{O}^Tx)$ with $\phi_0(y) = \prod_j (\gamma_j/\pi)^{1/4} e^{-\gamma_j y_j^2/2}$. This can then be ``lifted'' to the excited states $\psi_\alpha$ via \eqref{eigenvectors}, using that the operators $b_j$ are weakly continuous on, say, the Schwartz space functions, due to \eqref{bfromqp}.) In particular, this gives $\mathbb{P}$-measurability of the $E_\alpha$ and $\psi_\alpha$ and all the sets and functions which will be relevant for our analysis below.

Next, we introduce the subspace of the many-body Hilbert space $\mathcal{H}_\Lambda$ which will represent the many-body localized regime identified by all our main results. Let
\begin{equation} \label{restrict1}
S_{\lambda_0}:=\{j\in \{1,\ldots,|\Lambda|\};\ \gamma_j^2\in[0,\lambda_0]\}
\end{equation} and
\begin{equation} \label{restrict2}
\I: = \{ \alpha\in\mathbb{N}_0^{|\Lambda|}: \; \supp \alpha \subset S_{\lambda_0}\},
\end{equation}
meaning the set of all $\alpha\in\mathbb{N}_0^{|\Lambda|}$ with $\alpha_j=0$ for $j \notin S_{\lambda_0}$ .

We will show many-body localization for $H_\Lambda$ on the subspace of $\mathcal{H}_\Lambda$ spanned by the eigenvectors $\psi_\alpha$ with $\alpha \in \I$, i.e., the excitations of the ground state $\psi_0$ corresponding to localized states of the effective Hamiltonian $h_\Lambda$. Thus we will refer to the range of the spectral projection
\begin{equation} \label{restrict3}
P_\I :=P_\I (H_\Lambda)=\sum_{\alpha\in {\I}}|\psi_\alpha\rangle\langle\psi_\alpha|
\end{equation}
as the {\it regime of localized excitations} for $H_\Lambda$. Due to almost sure non-degeneracity of $h_\Lambda$ and corresponding uniqueness of the eigenfunction basis $\{\varphi_j\}$, we see from \eqref{bfromqp2} and \eqref{eigenvectors} that the projections $|\psi_\alpha \rangle \langle \psi_\alpha|$ and thus $P_{\I}$ are almost surely uniquely determined by $H_\Lambda$.

Note that, while by \eqref{eigenvalues} the range of $P_{\I}$ includes all eigenstates of $H_\Lambda$ to energies in $[E_0, E_0+2\sqrt{\lambda_0}]$, the localized excitations are not merely a low-energy regime for the many-body Hamiltonian. In fact, the range of $P_\I$ contains states with positive many-body energy density. In Appendix~\ref{App:energydensity} we will give a more precise expression for this energy density in terms of the density of states of the infinite volume Anderson model.

The range of many-body energies covered by the regime of localized excitations may generally be a mixed regime of (many-body) localized and delocalized states, due to the unproven possibility of extended states above $\lambda_0$ in the Anderson model. However, at sufficiently high disorder of the distribution $\mu$ one has full localization of the Anderson model and thus can choose $\lambda_0=\infty$ and $P_{\I}= \idty$. For this case Theorem~\ref{thm:Weyl} below was proven in \cite{NSS1} (Theorem~3.3). Theorems~\ref{thm:quasiloc} and \ref{thm:Weyl-Eigen-ExpDecay} are new, also for this case, and apply to states in the entire many-body Hilbert space ${\mathcal H}_\Lambda$, but reflect a dependance of constants on the maximal occupation number $\|\alpha\|_\infty := \max \{|\alpha_j|: j=1,\ldots,|\Lambda|\}$ of excitation vectors $\alpha \in \I$.

In our results below we will describe dynamical localization of the many-body system $H_\Lambda$ through the change of the support of local observables $A$ under the Heisenberg evolution $\tau_t(A) = e^{itH_\Lambda}Ae^{-itH_\Lambda}$. The restriction of our results to the regime of localized excitations will be reflected through two-sided projection into this regime, i.e, we will consider
\begin{equation} \label{Irestrict}
 A_\I:=P_\I A P_\I,
\end{equation}
and the restricted Heisenberg evolution $\tau_t(A_\I )=\tau_t(A)_\I$ for suitable observables $A$.

%%%%
%
% Section 3 - basic estimates
%
%%%%%%%%

\section{Preliminaries} \label{sec:prelim}

\subsection{Weyl operators and their restrictions} \label{sec:Weyl}

As the local Hilbert space $\mathcal{L}^2(\R)$ in \eqref{Hilbertspace} is infinite-dimensional, local observables may be unbounded. A convenient class of bounded observables which generates an irreducible sub-algebra of $\mathcal{B}({\mathcal H}_\Lambda)$ and for which we will state all our results is given by the Weyl (or \textit{displacement}) operators. For $f:\Lambda\rightarrow \mathbb{C}$ these are defined as the unitary operators
\begin{equation}
\W(f)=\exp{\left(i(q(f)+p(f))\right)},
\end{equation}
with the position and momentum operators
\begin{equation} \label{posmomops}
q(f)=\sum_{x\in\Lambda} \RE[f(x)] q_x, \quad p(f)=\sum_{x\in\Lambda} \IM[f(x)] p_x.
\end{equation}
While the latter are unbounded, it is well known that they have sufficiently large sets of analytic vectors, so that all formal manipulations used below are justified.
Note that $\W^*(f)=\W(-f)$  and $\W(0)=\idty$. It is also clear that
$\supp(\W(f))$ (in the sense of a the support of an operator on the tensor product \eqref{Hilbertspace}) coincides with $\supp(f)$ (the support of the function $f$).

We start by restating two basic properties of the Weyl operators, e.g.\ \cite{Brat-Rob2, NSS1}.

(i) The Weyl operators satisfy the so-called Weyl relations, i.e., for any $f,g:\Lambda\rightarrow\mathbb{C}$,
\begin{equation}\label{eq:Weyl-Relations}
\W(f+g)=e^{\frac{i}{2}\IM[\langle f,g\rangle]}\W(f)\W(g) = e^{-\frac{i}{2}\IM[\langle f,g\rangle]}\W(g)\W(f).
\end{equation}

(ii) The Heisenberg dynamics under $H_\Lambda$ of the Weyl operators $\tau_t(\W(f)) = e^{itH_\Lambda} \W(f) e^{-itH_\Lambda}$ is given by the following formula, which quantifies the fact that Weyl operators are mapped to Weyl operators under the time evolution:
\begin{equation} \label{Weyldyn}
\tau_t(\W(f))=\W(f_t), \text{ where } f_t=V^{-1}e^{2it\gamma}V f.
\end{equation}
Here the real-linear $V:\C^\Lambda \to \C^{|\Lambda|}$ is defined as
\begin{equation}\label{def:Weyl:Vf}
V f = \gamma^{-1/2}\mathcal{O}^T\RE[f] + i\gamma^{1/2}\mathcal{O}^T \IM[f],
\end{equation}
which is invertible with inverse
\begin{equation} \label{Vinverse}
V^{-1}g = \mathcal{O} \gamma^{1/2} \RE[g] +i\mathcal{O} \gamma^{-1/2} \IM[g].
\end{equation}

Additional properties of the Weyl operators which we will need are provided with proofs in Appendix~\ref{sec:Weylops}.

Next, we need to understand the restrictions $\W(f)_\I = P_\I \W(f)P_\I$ of the Weyl operators to the reducing subspaces introduced in \eqref{restrict1} to \eqref{restrict3} above. For this we will use that the Weyl operators can be expressed in terms of the operators $b_j$ from \eqref{bfromqp} as
\begin{equation} \label{defWeyl}
\W(f)=\exp\left(\frac{i}{\sqrt{2}}(b(Vf)+b^*(Vf))\right).
\end{equation}
Here $b(g) := \sum_j \bar{g}_j b_j$ and $b^*(g) := \sum_j g_j b_j$, from which $q(f)+p(f) = \frac{1}{\sqrt{2}}( b(Vf)+b^*(Vf))$ is found by a simple calculation.

Throughout the following we will write $\X:=\chi_{[0,\lambda_0]}(h_\Lambda) = \mathcal{O} \idty_{S_{\lambda_0}} \mathcal{O}^T$ for the orthogonal projection onto the localized energy regime of the effective Hamiltonian $h_\Lambda$.
In the proof of the following lemma we will also use the alternative representation
\begin{equation} \label{eq:Vinv-Chi-V}
\X = V^{-1} \idty_{S_{\lambda_0}} V,
\end{equation}
which easily follows from \eqref{def:Weyl:Vf} and \eqref{Vinverse}.

\begin{lem}[Restriction of Weyl Operators] \label{lem:Weyl-restriction}
For $f:\Lambda\rightarrow\C$,
\begin{equation}\label{eq:PIWPI}
\W(f)_\I  = C_f \W(\X f)P_\I,
\end{equation}
where $C_f :=\exp\left(-\frac{1}{4}\|\idty_{S_{\lambda_0}^c}Vf\|^2\right)$. Moreover, we have
\begin{equation}\label{eq:PI:commutewith-W}
[P_\I ,\W(\X f)]=0.
\end{equation}
\end{lem}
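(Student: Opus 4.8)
The plan is to work entirely in the free-boson representation \eqref{defWeyl}, reducing everything to linear-algebraic facts about the partition $\C^{|\Lambda|} = \idty_{S_{\lambda_0}}\C^{|\Lambda|} \oplus \idty_{S_{\lambda_0}^c}\C^{|\Lambda|}$ and to the fact that the $b_j$ with $j \in S_{\lambda_0}$ commute with the $b_m, b_m^*$ with $m \notin S_{\lambda_0}$. First I would split $Vf = \idty_{S_{\lambda_0}} Vf + \idty_{S_{\lambda_0}^c} Vf =: g_1 + g_2$, so that by \eqref{eq:Vinv-Chi-V} we have $g_1 = V(\X f)$ and $g_2 = V((\idty - \X)f)$. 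Using \eqref{defWeyl} and the CCR \eqref{b:CCR}, the creation/annihilation combinations built from $g_1$ and from $g_2$ commute with each other (their commutator is a multiple of $\langle g_1, g_2\rangle = 0$, since $g_1, g_2$ have disjoint supports), so the BCH/Weyl factorization gives $\W(f) = \W_1 \W_2$ where $\W_1 := \exp(\tfrac{i}{\sqrt 2}(b(g_1) + b^*(g_1))) = \W(\X f)$ and $\W_2 := \exp(\tfrac{i}{\sqrt 2}(b(g_2) + b^*(g_2)))$; here I would either invoke \eqref{eq:Weyl-Relations} directly (with the phase vanishing) or the analogous identity from Appendix~\ref{sec:Weylops}.

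Next I would establish \eqref{eq:PI:commutewith-W}: since $\W(\X f) = \W_1$ is a function only of the $b_j, b_j^*$ with $j \in S_{\lambda_0}$, and $P_\I$ by \eqref{restrict3}, \eqref{restrict2}, \eqref{eigenvectors} is the projection onto the closed span of monomials in precisely those $b_j^*$ acting on $\psi_0$, the subspace $\Ran P_\I$ is invariant under both $b_j$ and $b_j^*$ for $j \in S_{\lambda_0}$ (lowering, raising, or annihilating occupation numbers within $S_{\lambda_0}$, and $b_j \psi_0 = 0$), hence invariant under $\W_1$; the orthogonal complement is spanned by the $\psi_\alpha$ with $\alpha_m \ge 1$ for some $m \notin S_{\lambda_0}$, and $\W_1$ preserves that condition too. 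Therefore $\W(\X f)$ commutes with $P_\I$. This also needs a word on domains — that the unbounded generators have enough analytic vectors so that "commutes" is literal — but that is exactly the kind of standing remark already made after \eqref{posmomops}, and I would just cite it.

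For \eqref{eq:PIWPI} I then compute $P_\I \W(f) P_\I = P_\I \W_1 \W_2 P_\I = \W_1 P_\I \W_2 P_\I$ using the commutation just proved. It remains to show $P_\I \W_2 P_\I = C_f P_\I$, i.e.\ that $\W_2$, sandwiched by $P_\I$, acts as the scalar $C_f = \exp(-\tfrac14 \|g_2\|^2)$. The cleanest route: on $\Ran P_\I$ every vector is a limit of finite combinations of the $\psi_\alpha$ with $\alpha \in \I$, all of which are annihilated by $b_m$ for $m \notin S_{\lambda_0}$ and lie in the kernel of $b_m^*$'s adjoint action in the relevant sense — more precisely, writing $\W_2$ via the CCR as $e^{-\frac14\|g_2\|^2} \exp(\tfrac{i}{\sqrt2} b^*(g_2)) \exp(\tfrac{i}{\sqrt2} b(g_2))$ (normal ordering), the right factor $\exp(\tfrac{i}{\sqrt2}b(g_2))$ fixes each $\psi_\alpha$, $\alpha \in \I$, since $b(g_2)\psi_\alpha = 0$, while the left factor $\exp(\tfrac{i}{\sqrt2}b^*(g_2))$ pushes $\psi_\alpha$ into the orthogonal complement of $\Ran P_\I$ except for its zeroth-order term $\psi_\alpha$ itself; hence $\langle \psi_\beta, \W_2 \psi_\alpha\rangle = e^{-\frac14\|g_2\|^2}\delta_{\alpha\beta}$ for $\alpha, \beta \in \I$, giving $P_\I \W_2 P_\I = C_f P_\I$.

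\textbf{Main obstacle.} The only genuinely delicate point is the normal-ordering step and the attendant unboundedness: justifying $\W_2 = e^{-\frac14\|g_2\|^2}\exp(\tfrac{i}{\sqrt2}b^*(g_2))\exp(\tfrac{i}{\sqrt2}b(g_2))$ as an operator identity on a dense domain of analytic vectors, and verifying that $\exp(\tfrac{i}{\sqrt2}b^*(g_2))$ applied to $\psi_\alpha$ has no further overlap with $\Ran P_\I$ beyond the leading term. I expect this to be routine given the analytic-vector machinery for the CCR referenced in Appendix~\ref{sec:Weylops}, but it is where care is needed; everything else is bookkeeping with disjoint supports and the CCR.
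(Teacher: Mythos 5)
Your proposal is correct and follows essentially the same route as the paper: the same factorization $\W(f)=\W(\X f)\,\W(\X^c f)$ into the $S_{\lambda_0}$ modes and their complement, the same observation that the first factor is block-diagonal for $P_\I$, and the same reduction of the second factor to the scalar $C_f$ on $\Ran P_\I$. The only cosmetic difference is that the paper extracts $C_f$ from the product formula \eqref{eq:Weyl:product} together with the vacuum expectation \eqref{fact1}, whereas you obtain it directly by normal ordering $\W_2$ and using $b(g_2)\psi_\alpha=0$ for $\alpha\in\I$ — the same computation in different clothing.
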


Note that, since $\W(\cdot)$ is unitary and $P_\I $ is a non-zero orthogonal projection (as $\psi_0 \in \Ran(P_\I)$), (\ref{eq:PIWPI}) gives
\begin{equation}
0<\|\W(f)_\I \|=C_f \le 1
\end{equation}
and $\|\W(f)_\I \|<1$ if supp$\,Vf \not\subset S_{\lambda_0}$.

\begin{proof}

The proof of this lemma follows from Lemma~\ref{thm:Weyl:entries}. In fact, using e.g.\ (\ref{eq:Weyl:product}), one readily checks that
the quantity $\W( \X f)$ is diagonal with respect to $P_{\mathcal{I}}$, i.e.
\begin{equation}
P_\I \W(\X f) (1 - P_{\I}) = 0 = (1 - P_{\I}) \W(\X f) P_{\I}.
\end{equation}
As a result, (\ref{eq:PI:commutewith-W}) is clear.

If we now denote by $\X^c = V^{-1} \idty_{S^c_{\lambda_0}} V$, one has that
\begin{equation}
\W(f) = \W(\X^c f) \W(\X f)
\end{equation}
and then (\ref{eq:PIWPI}) is again a consequence of (\ref{eq:Weyl:product}); here we use specifically (\ref{fact1}).
\end{proof}

\subsection{Localization of the effective dynamics} \label{sec:effdyn}

The Heisenberg dynamics of the Weyl operators is related to the effective dynamics $f_t=V^{-1}e^{2it\gamma}V f$ on $\ell^2(\Lambda)$ through \eqref{Weyldyn}. A crucial link between one-body and many-body localization properties will thus be given by expressing the one-body localization bound \eqref{def:eig-corr} in terms of a localization bound for the effective dynamics.

To state this, it is useful to identify
\begin{equation}
f \in \ell^2( \Lambda; \mathbb{C}) \quad \mbox{with} \quad \begin{pmatrix} {\rm Re}[f] \\ {\rm Im}[f] \end{pmatrix} \in \ell^2( \Lambda; \mathbb{R}) \oplus \ell^2( \Lambda; \mathbb{R}) \, .
\end{equation}
In particular, for any $f,g \in \ell^2( \Lambda; \mathbb{C})$ it is clear that
\begin{equation} \label{real-ip}
{\rm Re}[ \langle f,g \rangle] = \left\langle \begin{pmatrix} {\rm Re}[f] \\ {\rm Im}[f] \end{pmatrix}, \begin{pmatrix} \idty & 0 \\ 0 & - \idty \end{pmatrix} \begin{pmatrix} {\rm Re}[g] \\ {\rm Im}[g] \end{pmatrix} \right\rangle
\end{equation}
and similarly
\begin{equation} \label{imag-ip}
{\rm Im}[ \langle f,g \rangle] = \left\langle \begin{pmatrix} {\rm Re}[f] \\ {\rm Im}[f] \end{pmatrix}, \begin{pmatrix} 0 & \idty \\  - \idty  & 0 \end{pmatrix} \begin{pmatrix} {\rm Re}[g] \\ {\rm Im}[g] \end{pmatrix} \right\rangle .
\end{equation}

\begin{lem}[Localization of the Effective Dynamics] \label{lem:ave-dec-l2} Let $f,g \in \ell^2( \Lambda; \mathbb{C})$. For any $t \in \mathbb{R}$, take $g_t = V^{-1} e^{2i t \gamma} Vg$;
see (\ref{Weyldyn}). Under the assumption of eigencorrelator decay, i.e. (\ref{def:eig-corr}), one has that
\begin{equation} \label{ave-dyn-in-l2}
\mathbb{E} \left( \sup_{t \in \mathbb{R}} | \langle f, \X g_t \rangle | \right) \leq 2 C (1+ \lambda_0^{1/2})^2 \sum_{x,y \in \Lambda} |f(x)| |g(y)| e^{- \mu|x-y|}
\end{equation}
Here $C$ and $\mu$ are as in (\ref{def:eig-corr}).
\end{lem}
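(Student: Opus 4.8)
The plan is to expand $\langle f,\X g_t\rangle$ in the eigenbasis of $h_\Lambda$ and to recognize each term that appears as a matrix element of $h_\Lambda^{-1/2}u(h_\Lambda)\chi_{[0,\lambda_0]}(h_\Lambda)$, so that the hypothesis \eqref{def:eig-corr} can be applied directly.

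First, I would use $\X=V^{-1}\idty_{S_{\lambda_0}}V$ (see \eqref{eq:Vinv-Chi-V}) together with $VV^{-1}=\idty$ to write, from \eqref{Weyldyn},
\[
\X g_t = V^{-1}\idty_{S_{\lambda_0}}V\,V^{-1}e^{2it\gamma}Vg = V^{-1}\idty_{S_{\lambda_0}}e^{2it\gamma}Vg .
\]
Inserting the explicit formulas \eqref{def:Weyl:Vf}, \eqref{Vinverse} for $V$ and $V^{-1}$ and $e^{2it\gamma}=\cos(2t\gamma)+i\sin(2t\gamma)$ (a diagonal matrix), a direct computation expresses $(\X g_t)(x)$ through the eigenfunctions $\varphi_j$ with $j\in S_{\lambda_0}$ and the numbers $\langle\varphi_j,\RE g\rangle$, $\langle\varphi_j,\IM g\rangle$, multiplied by factors $\gamma_j^{\pm1/2}$ and $\cos(2t\gamma_j)$ or $\sin(2t\gamma_j)$. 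Pairing against $f$ and splitting into real and imaginary parts, $\langle f,\X g_t\rangle$ becomes a sum over $j\in S_{\lambda_0}$ of eight terms of the form
\[
\pm\,\langle\varphi_j,r\rangle\langle\varphi_j,r'\rangle\,\gamma_j^{\eta}\,m_t(\gamma_j),
\]
with $r\in\{\RE f,\IM f\}$, $r'\in\{\RE g,\IM g\}$, $m_t\in\{\cos(2t\,\cdot\,),\sin(2t\,\cdot\,)\}$, and $\eta\in\{-1,0,1\}$, the three exponents coming from the four products $\gamma_j^{\pm1/2}\cdot\gamma_j^{\pm1/2}$ contributed by $V^{-1}$ and $V$.

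For each term I would factor $\gamma_j^{\eta}=\gamma_j^{-1}\cdot\gamma_j^{\eta+1}$ and use $h_\Lambda\varphi_j=\gamma_j^2\varphi_j$ (hence $h_\Lambda^{-1/2}\varphi_j=\gamma_j^{-1}\varphi_j$) and $\chi_{[0,\lambda_0]}(h_\Lambda)\varphi_j=\idty_{S_{\lambda_0}}(j)\varphi_j$ to rewrite the sum over $j\in S_{\lambda_0}$ as $\langle r,\,h_\Lambda^{-1/2}u_t(h_\Lambda)\chi_{[0,\lambda_0]}(h_\Lambda)\,r'\rangle$ with the Borel function $u_t(s)=s^{(\eta+1)/2}m_t(\sqrt s)$, which satisfies $|u_t(s)\chi_{[0,\lambda_0]}(s)|\le\lambda_0^{(\eta+1)/2}\in\{1,\lambda_0^{1/2},\lambda_0\}$ uniformly in $t$. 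Expanding $r=\sum_x r(x)\delta_x$, $r'=\sum_y r'(y)\delta_y$, using $|\RE f|,|\IM f|\le|f|$ and likewise for $g$, pulling $\sup_t$ inside the finite sum and then applying $\E$, and invoking \eqref{def:eig-corr} for each term (after dividing $u_t$ by $\lambda_0^{(\eta+1)/2}$ so it is bounded by $1$), one bounds the contribution of each of the eight terms to $\E(\sup_t|\langle f,\X g_t\rangle|)$ by $\lambda_0^{(\eta+1)/2}\,C\sum_{x,y}|f(x)||g(y)|e^{-\mu|x-y|}$. The eight coefficients $\lambda_0^{(\eta+1)/2}$ sum to $2+4\lambda_0^{1/2}+2\lambda_0=2(1+\lambda_0^{1/2})^2$, which gives \eqref{ave-dyn-in-l2}.

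I expect the only real work to be the bookkeeping in the middle step: for each of the eight terms, correctly tracking the trigonometric factor and the power of $\gamma_j$, and in particular handling the real-linearity of $V$ so that $\RE f,\IM f,\RE g,\IM g$ and the factors $\gamma_j^{\pm1/2}$ get combined in the right way. There is no analytic difficulty: the ``singular'' factor $h_\Lambda^{-1/2}$ in \eqref{def:eig-corr} is precisely what absorbs the $\gamma_j^{-1}$ produced by $V^{-1}$ acting on the imaginary part, while the positive powers of $\gamma_j$ arising elsewhere are harmless since $\gamma_j\le\lambda_0^{1/2}$ for $j\in S_{\lambda_0}$.
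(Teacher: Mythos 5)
Your proposal is correct and follows essentially the same route as the paper: the paper writes $\X g_t$ via the $2\times 2$ operator matrix with entries $\cos(2th_\Lambda^{1/2})\X$ and $h_\Lambda^{\pm 1/2}\sin(2th_\Lambda^{1/2})\X$ and splits $\langle f,\X g_t\rangle$ into real and imaginary parts, which is exactly your eight-term eigenbasis decomposition with $\eta\in\{-1,0,1\}$ occurring with multiplicities $2,4,2$. Your bookkeeping, including absorbing the bounded trigonometric factors into $u(h_\Lambda)$ and the count $2+4\lambda_0^{1/2}+2\lambda_0=2(1+\lambda_0^{1/2})^2$, matches the paper's use of \eqref{def:eig-corr} together with the derived bounds \eqref{eig-corr-2} and \eqref{eig-corr-3}.
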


\begin{proof}
First, note that absorbing an extra term $h_\Lambda^{1/2}$ and $h_\Lambda$, respectively, into $u(h_\Lambda)$, one sees that \eqref{def:eig-corr} implies the bounds
\begin{equation} \label{eig-corr-2}
\mathbb{E}\left(\sup_{|u|\leq 1} \left|\langle\delta_x, u(h_\Lambda) \X \delta_y\rangle \right|\right)  \le C \lambda_0^{1/2} e^{-\mu|x-y|},
\end{equation}
\begin{equation} \label{eig-corr-3}
\mathbb{E}\left(\sup_{|u|\leq 1} \left|\langle\delta_x,h_{\Lambda}^{1/2} u(h_\Lambda)\X\delta_y\rangle \right|\right)  \le  C \lambda_0 e^{-\mu|x-y|}.
\end{equation}

We prove (\ref{ave-dyn-in-l2}) by estimating the real and imaginary parts of the inner-product separately.
In fact, a short calculation based on \eqref{def:Weyl:Vf} and \eqref{Vinverse} and the fact that $\X$ is a real operator shows that
\begin{equation}
\begin{pmatrix} {\rm Re}[ \X g_t] \\ {\rm Im}[ \X g_t] \end{pmatrix} = \begin{pmatrix} \cos(2t h_{\Lambda}^{1/2}) \X & - \sin(2th_{\Lambda}^{1/2}) h_{\Lambda}^{1/2} \X \\  \sin(2th_{\Lambda}^{1/2}) h_{\Lambda}^{-1/2} \X & \cos(2t h_{\Lambda}^{1/2}) \X \end{pmatrix} \begin{pmatrix} {\rm Re}[g] \\ {\rm Im}[g] \end{pmatrix}.
\end{equation}
In this case, a rough estimate, using (\ref{real-ip}), implies
\begin{eqnarray}
| {\rm Re}[ \langle f, \X g_t \rangle ] | & \leq & \sum_{x,y \in \Lambda} |f(x)| |g(y)| \left( \langle \delta_x, \cos(2t h_{\Lambda}^{1/2}) \X \delta_y \rangle +  \langle \delta_x, h_{\Lambda}^{1/2} \sin(2 th_{\Lambda}) \X \delta_y \rangle \right) \nonumber \\
& \mbox{ } & \quad +\sum_{x,y \in \Lambda}  |f(x)| |g(y)| \left( \langle \delta_x, h_{\Lambda}^{-1/2} \sin(2th_{\Lambda}^{1/2}) \X \delta_y \rangle +  \langle \delta_x, \cos(2th_{\Lambda}^{1/2}) \X \delta_y \rangle  \right)
\end{eqnarray}
A similar estimate, using (\ref{imag-ip}), applies to $| {\rm Im}[ \langle f, \X g_t \rangle ] |$.
The result in (\ref{ave-dyn-in-l2}) now follows immediately from an application of (\ref{def:eig-corr}), (\ref{eig-corr-2}) and (\ref{eig-corr-3}).
\end{proof}

Bounds as the sum on the right hand side of \eqref{ave-dyn-in-l2} naturally appear in our proofs, also in several of our main results below. We will generally keep the bounds in this form, but one can also state them as bounds involving $\ell^p$-norms of $f$ and $g$. Most directly, one gets exponential decay in the distance of $\supp f$ and $\supp g$, with constants proportional to $\|f\|_1$ and $\|g\|_1$.  Using H\"older's inequality (and saving part of the factor $e^{-\mu|x-y|}$ for decay in the distance of supports), one can also turn this into bounds in terms of other $\ell^p$-norms, if more suitable for a desired application.

%%%%
%
% Section 4 - LRBs
%
%%%%%%%%

\section{Lieb-Robinson bounds} \label{sec:LRBs}

We now turn to our first main result and its proof, a zero-velocity Lieb-Robinson bounds for the Heisenberg dynamics of Weyl operators as well as of local position and momentum operators, in each case restricted via the projection $P_\I$ onto the regime of localized excitations.

To conveniently state our results for position and momentum operators,  we introduce the $2\times 2$-block matrix
\begin{equation}
A_{t,\I}(f,g):=\begin{pmatrix}
  [\tau_t(q(f)_\I),q(g)_\I] & [\tau_t( q(f)_\I),p(g)_\I] \\
  [\tau_t(p(f)_\I),q(g)_\I] & [\tau_t(p(f)_\I),p(g)_\I] \\
\end{pmatrix}.
\end{equation}
Thus for the four choices $j, k \in \{1,2\}$ the matrix elements $(A_{t,\I}(f,g))_{j,k}$ cover all possible Lieb-Robinson-type commutators between local position operators $q(f)$, $q(g)$ and local momentum operators $p(f)$, $p(g)$ as defined by \eqref{posmomops}.

\begin{thm}[Restricted Lieb-Robinson Bounds]\label{thm:Weyl}
For any $f,g: \Lambda \to \C$,
\begin{equation} \label{WeylLR}
\mathbb{E}\left(\sup_{t\in\mathbb{R}}\left\|\left[\tau_t(\W(f)_\I ),\W(g)_\I \right]\right\|\right)\leq
C(1+\lambda_0^{1/2})^2\sum_{x,y \in \Lambda} |f(x)| |g(y)|\ e^{-\mu|x-y|}.
\end{equation}

Moreover, for all $j,k\in\{1,2\}$,
\begin{equation} \label{posmomLR}
\mathbb{E}\left(\sup_{t\in\mathbb{R}}\|(A_{t,\I}(f,g))_{j,k}\|\right)\leq
C\lambda_0^{\frac{j+k-2}{2}}\sum_{x,y \in \Lambda}|f(x)| |g(y)|\ e^{-\mu|x-y|}.
\end{equation}
Here  $C$ and $\mu$ are the constants in the eigencorrelator localization bound (\ref{def:eig-corr}).
\end{thm}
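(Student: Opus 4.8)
The plan is to deduce both estimates from the one-body localization inputs already assembled: Lemma~\ref{lem:ave-dec-l2} for the Weyl bound \eqref{WeylLR}, and the three eigencorrelator estimates \eqref{def:eig-corr}, \eqref{eig-corr-2}, \eqref{eig-corr-3} for the position/momentum bounds \eqref{posmomLR}. The mechanism common to both is this: since $P_\I$ is a spectral projection of $H_\Lambda$ one has $\tau_t(P_\I)=P_\I$, so $\tau_t(A_\I)=P_\I\tau_t(A)P_\I$; and after this two-sided projection the operators in question become multiples of operators that \emph{commute} with $P_\I$, which makes the commutators in the theorem collapse to a single ``classical'' term, with no residual contribution of the form $P_\I(\cdot)(1-P_\I)(\cdot)P_\I$.

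For \eqref{WeylLR}: by \eqref{Weyldyn} and Lemma~\ref{lem:Weyl-restriction}, $\tau_t(\W(f)_\I)=P_\I\W(f_t)P_\I=C_{f_t}\W(\X f_t)P_\I$, and by \eqref{eq:PI:commutewith-W} both $\W(\X f_t)$ and $\W(\X g)$ commute with $P_\I$. Hence $[\tau_t(\W(f)_\I),\W(g)_\I]=C_{f_t}C_g\,[\W(\X f_t),\W(\X g)]\,P_\I$, and the Weyl relations \eqref{eq:Weyl-Relations} give $[\W(\X f_t),\W(\X g)]=(e^{-i\IM\langle\X f_t,\X g\rangle}-1)\W(\X g)\W(\X f_t)$; using $|C_{f_t}|,|C_g|\le 1$, unitarity, and $|e^{i\theta}-1|\le|\theta|$, the norm of the commutator is at most $|\IM\langle\X f_t,\X g\rangle|$. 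Now I would use that the effective dynamics $g\mapsto g_t$ is a one-parameter group preserving the symplectic form $\IM\langle\cdot,\cdot\rangle$ (immediate from \eqref{def:Weyl:Vf}, \eqref{Vinverse}, since $\mathcal{O}$ is orthogonal and $e^{2it\gamma}$ unitary; this is exactly why Weyl operators evolve into Weyl operators), together with $\X g_t=(\X g)_t$ (from \eqref{eq:Vinv-Chi-V} and $[\idty_{S_{\lambda_0}},e^{2it\gamma}]=0$), to rewrite $\IM\langle\X f_t,\X g\rangle=\IM\langle f_t,\X g\rangle=\IM\langle f,\X g_{-t}\rangle$ (the first equality since $\X=\X^*=\X^2$). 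Taking $\E(\sup_t|\cdot|)$ and using $\sup_t=\sup_{-t}$, the claim follows from the estimate for the imaginary part contained in the proof of Lemma~\ref{lem:ave-dec-l2}, which yields exactly $C(1+\lambda_0^{1/2})^2\sum_{x,y}|f(x)||g(y)|e^{-\mu|x-y|}$ out of \eqref{def:eig-corr}, \eqref{eig-corr-2}, \eqref{eig-corr-3} applied to the three surviving terms.

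For \eqref{posmomLR}: I would first record the classical linear Heisenberg dynamics of the oscillators, obtained from \eqref{Weyldyn} by differentiating $\tau_t(\W(sf))=\W(sf_t)$ in $s$ at $s=0$ (equivalently, from $\dot q=2p$, $\dot p=-2h_\Lambda q$). Writing $Q[w]:=\sum_x w(x)q_x$ and $P[w]:=\sum_x w(x)p_x$ for real $w$ (so $q(f)=Q[\RE f]$, $p(f)=P[\IM f]$), these read
\[
\tau_t(Q[w])=Q[\cos(2th_\Lambda^{1/2})w]+P[h_\Lambda^{-1/2}\sin(2th_\Lambda^{1/2})w],\qquad
\tau_t(P[w])=-Q[h_\Lambda^{1/2}\sin(2th_\Lambda^{1/2})w]+P[\cos(2th_\Lambda^{1/2})w].
\]
Next, the position/momentum analogue of Lemma~\ref{lem:Weyl-restriction}: writing $q_x,p_x$ in terms of the $b_j$, one checks $P_\I Q[w]P_\I=Q[\X w]P_\I$ and $P_\I P[w]P_\I=P[\X w]P_\I$, and that $Q[\X w]$ and $P[\X w]$ commute with $P_\I$ (the $b_j,b_j^*$ with $j\in S_{\lambda_0}$ leave $\Ran P_\I$ and its complement invariant, while those with $j\notin S_{\lambda_0}$ vanish after two-sided projection). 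Combining these two facts, each $(A_{t,\I}(f,g))_{j,k}$ is a commutator of operators commuting with $P_\I$, hence equals $[\cdot,\cdot]P_\I$ with the inner commutator evaluated by the CCR $[q_x,p_y]=i\delta_{xy}$; since the $q_x$ commute among themselves and likewise the $p_x$, only one cross term survives and it is a scalar. One obtains, for $j,k\in\{1,2\}$,
\[
(A_{t,\I}(f,g))_{j,k}=\pm\, i\left\langle r_j(f),\; h_\Lambda^{(j+k-3)/2}\,s_{j,k}(2th_\Lambda^{1/2})\,\X\, r_k(g)\right\rangle P_\I,
\]
with $r_1=\RE$, $r_2=\IM$, and $s_{j,k}\in\{\cos,\sin\}$ (so $|s_{j,k}|\le 1$). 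Expanding in the $\delta_x$-basis and taking $\E(\sup_t|\cdot|)$, the power $h_\Lambda^{(j+k-3)/2}$ absorbs into the estimate: for $j+k=2$ the singular factor $h_\Lambda^{-1/2}$ is precisely what \eqref{def:eig-corr} accommodates (without it one would only get a $t$-dependent bound), for $j+k=3$ one uses \eqref{eig-corr-2}, and for $j+k=4$ one uses \eqref{eig-corr-3}, in each case producing the prefactor $C\lambda_0^{(j+k-2)/2}$ and the double sum of \eqref{posmomLR}.

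The step I expect to be the main obstacle is the bookkeeping behind the collapse of the commutators, i.e.\ establishing carefully the position/momentum analogue of Lemma~\ref{lem:Weyl-restriction} so that nothing of the form $P_\I(\cdot)(1-P_\I)(\cdot)P_\I$ survives, and---on the Weyl side---transferring the time argument onto $g$ via the symplectic invariance of the effective dynamics so that Lemma~\ref{lem:ave-dec-l2} applies with the stated constant rather than a worse one. The unboundedness of $q(f),p(f)$ is handled as elsewhere in the paper through the common core of analytic vectors, noting that the final commutator expressions are in fact bounded.
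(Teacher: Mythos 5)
Your proof is correct and follows essentially the same route as the paper's: Lemma~\ref{lem:Weyl-restriction}, the Weyl relations, and the imaginary-part estimate inside Lemma~\ref{lem:ave-dec-l2} for \eqref{WeylLR}, and the collapse of the two-sided projected commutators to scalar multiples of $P_\I$ (via the projected CCR) estimated through \eqref{def:eig-corr}, \eqref{eig-corr-2} and \eqref{eig-corr-3} for \eqref{posmomLR}. The only cosmetic differences are that the paper transfers the time argument by simply conjugating $\langle \X f_t, g\rangle$ rather than invoking symplectic invariance of the effective dynamics, and organizes the position/momentum part as a direct computation of the matrix $A_{t,\I}(\delta_x,\delta_y)$ following \cite{NSS1} rather than through a separate restriction lemma for $Q[w]$ and $P[w]$.
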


As discussed at the end of Section~\ref{sec:effdyn}, this gives disorder averaged Lieb-Robinson bounds, exponentially decaying in the distance of the supports of the observables considered, uniform in time, thus interpreted as zero-velocity LR bounds. For the case of large disorder, where \eqref{def:eig-corr} holds with $\lambda_0=\infty$ and restriction to the range of $P_\I$ is not necessary, this has been shown previously in \cite{NSS1}.

While we restrict to the regime of localized excitations, we point out that the bounds do not depend on the maximal excitation number $\|\alpha\|_\infty$ of the states $\varphi_\alpha$ spanning $P_\I$ (in other words, we do not have to restrict with projections smaller than $P_\I$ which would limit the excitation number). This is different, and in some sense stronger, than the situation we will find for the quantities considered in Theorems~\ref{thm:quasiloc} and \ref{thm:Weyl-Eigen-ExpDecay} below.

\begin{proof}
For the proof of \eqref{WeylLR} we start from
\begin{eqnarray}
\left[\tau_t\left(\W(f)_\I \right),\W(g)_\I \right]&=&\left[\W(f_t)_\I ,\W(g)_\I \right] \\
&=& C_{f_t} C_g \left[\W(\X f_t), \notag
\W(\X g)\right]P_\I
\end{eqnarray}
where we used \eqref{eq:PIWPI} and (\ref{eq:PI:commutewith-W}). Then with $\|P_\I \|= 1$, $C_{f_t} C_g \le 1$ and the Weyl relations (\ref{eq:Weyl-Relations}),
we find
\begin{eqnarray}\label{eq:LR:proof:step1}
\left\|\left[\tau_t(\W(f)_\I ),\W(g)_\I \right]\right\|&\leq& \left\|[\W(\X f_t),\W(\X g)]\right\| \\
&=& \left|(e^{-i\IM[\langle\X f_t,g\rangle]}-1)\right|\left\|\W(\X g)\W(\X f_t )\right\| \notag \\
&\leq& |\IM[\langle\X f_t,g\rangle]|. \notag
\end{eqnarray}
The claim in (\ref{WeylLR}) now follows from an application of Lemma~\ref{lem:ave-dec-l2}. Note that we saved a factor of
2 since we need only consider the imaginary part above.

Our proof of (\ref{posmomLR}) mimics closely the methods in Section 3.2 of \cite{NSS1}. In fact, for any $x,y \in \Lambda$ and all $t \in \mathbb{R}$,
let us define
\begin{equation}
A_{t,\I}(\delta_x,\delta_y) = -i \begin{pmatrix} [ \tau_t( (q_x)_{\I}), (q_y)_{\I}] &  [ \tau_t( (q_x)_{\I}), (p_y)_{\I}] \\
 [ \tau_t( (p_x)_{\I}), (q_y)_{\I}] &  [ \tau_t( (p_x)_{\I}), (p_y)_{\I}]  \end{pmatrix} \, ,
\end{equation}
compare with (3.27) in \cite{NSS1}. Arguing as in the proof of Lemma 3.7 in \cite{NSS1}, a short calculation shows that
\begin{equation} \label{p+q:com}
A_{t,\I}(\delta_x,\delta_y) =  \begin{pmatrix} - \langle \delta_x, h_{\Lambda}^{-1/2} \sin(2 t h_{\Lambda}^{1/2}) \X \delta_y \rangle P_{\I} &   \langle \delta_x,  \cos(2 t h_{\Lambda}^{1/2}) \X \delta_y \rangle P_{\I} \\
 - \langle \delta_x, \cos(2 t h_{\Lambda}^{1/2}) \X \delta_y \rangle P_{\I} & - \langle \delta_x, h_{\Lambda}^{1/2} \sin(2 t h_{\Lambda}^{1/2}) \X \delta_y \rangle P_{\I}  \end{pmatrix}
\end{equation}
For this calculation, we observe that the relation
\begin{equation}
\begin{pmatrix}
  b_\I \\
  b^*_\I \\
\end{pmatrix}
\left( (b_\I)^T, (b^*_\I)^T \right)
-\left(
\begin{pmatrix}
  b_\I \\
  b^*_\I \\
\end{pmatrix}
\left( (b_\I)^T, (b^*_\I)^T \right)
\right)^T=\begin{pmatrix}
            0 & P_\I  \otimes \idty_{S_{\lambda_0}} \\
            -P_\I  \otimes \idty_{S_{\lambda_0}} & 0 \\
          \end{pmatrix}.
\end{equation}
replaces (3.31) in the proof of Lemma 3.7 of \cite{NSS1}. Here $b_{\I}$, and similarly $b^*_{\I}$, is the
$| \Lambda|$-dimensional column vector with operator-valued entries given by
$(b_{\I})_j = P_{\I}b_jP_{\I}$ for $1 \leq j \leq |\Lambda|$. Moreover,
$P_\I  \otimes \idty_{S_{\lambda_0}}$ is our notation for the $| \Lambda| \times | \Lambda|$
diagonal matrix with operator-valued entries given by $(P_\I  \otimes \idty_{S_{\lambda_0}})_{jj} = P_{\I}$
if $j \in S_{\lambda_0}$ and $0$ otherwise.

Expanding $f$ and $g$ in the basis $\{ \delta_x \}_{x \in \Lambda}$, it is clear that (\ref{posmomLR})
follows from (\ref{p+q:com}) and our eigenfunction correlator assumption (\ref{def:eig-corr}); see also
(\ref{eig-corr-2}) and (\ref{eig-corr-3}).

\end{proof}

%%%%%
%
% Quasi-Locality
%
%%%%%%%

\section{On quasi-locality estimates for restricted Weyl operators} \label{sec:quasiloc}

In this section, we will prove a quasi-locality estimate for Weyl operators restricted to
the regime of localized excitations.  
We begin with a brief description of this result which is the content of
Theorem~\ref{thm:quasiloc} below.
Recall that, as has been discussed in Section~\ref{sec:Weyl}, the harmonic evolution of Weyl operators satisfies the following relation:
for any $f : \Lambda \to \mathbb{C}$
\begin{equation} \label{harm-dyn-weyl}
\tau_t(\W(f)) = \W(f_t) \quad \mbox{where} \quad f_t = V^{-1} e^{2it \gamma} V f \, ,
\end{equation}
see (\ref{Weyldyn}), with $V$ being the operator defined in (\ref{def:Weyl:Vf}). If $f$ is a strictly local function, i.e. the
support of $f$ is contained in some $X \subset \Lambda$, then it is clear that the
corresponding Weyl operator is strictly local as well, in the sense that
$\W(f)$ is an element of the sub-algebra $\mathcal{B}( \mathcal{H}_X) \subset \mathcal{B}( \mathcal{H}_{\Lambda})$. Due to non-relativistic effects, strict
locality is not preserved by the harmonic dynamics; if $f$ has support in $X \subset \Lambda$,
then generically, for any $t \neq 0$, $f_t$ and therefore $\tau_t(\W(f))$ have non-trivial support
on all of $\Lambda$. The goal of this section is to prove a quasi-locality estimate
for these dynamically evolved Weyl operators. In particular,
we will provide an estimate on the difference between the full dynamics and
a strictly local approximation. Moreover, disorder allows us to demonstrate that our
bounds hold uniformly in time.

The basic set-up is as follows.

Let $X \subset \Lambda$ and $f: \Lambda \to \mathbb{C}$ be a function with support in $X$;
we will denote this by ${\rm supp}(f) \subset X$. To define our strictly local approximation,
it is convenient to introduce
\begin{equation}\label{def:Xn}
X(n) =\{x\in\Lambda;\ \text{dist}(x,X)\leq n\}
\end{equation}
for any $n \geq 0$. We may refer to $X(n)$ as the $n$-neighborhood of $X$.
Let us also introduce a notion of boundary for sets $X \subset \Lambda$; namely set
\begin{equation}
\partial X = \{ x \in X : \mbox{ there exists } y \in \Lambda \setminus X \mbox{ with } |x-y| =1 \} \, .
\end{equation}

Since we have not assumed full many-body localization, our results require an energy restriction to the
regime of localized excitations.
An application of Lemma~\ref{lem:Weyl-restriction} shows that
 \begin{equation} \label{e-proj-weyl}
 \tau_t(\W(f))_{\mathcal{I}} = \W(f_t)_{\mathcal{I}} = C_{f_t} \W( \X f_t) P_{\mathcal{I}}
 \end{equation}
 where the positive prefactor $C_{f_t} = C_f$ is independent of time. In fact, from
 the form of $f_t$, see (\ref{harm-dyn-weyl}), it is clear that $\| \idty_{S_{\lambda_0}^c} Vf_t \|  = \| \idty_{S_{\lambda_0}^c} Vf \|$.
Based on (\ref{e-proj-weyl}), we will choose a strictly local approximation of
 $\tau_t(\W(f))$ proportional to $\W( \idty_{X(n)} \X f_t)$; the
latter trivially having support in $X(n) \subset \Lambda$. More precisely, we set 
\begin{equation} \label{strictly-local-app}
\widehat{W} = \widehat{C} \cdot \W( \idty_{X(n)} \X f_t) \quad \mbox{and choose} \quad \widehat{C} =   \frac{C_f}{C_{\idty_{X(n)} \X f_t}} \, .
\end{equation}
The positive factors $C_f$ and $C_{\idty_{X(n)} \X f_t}$ are as in the statement of Lemma~\ref{lem:Weyl-restriction}. In this case, it is clear that $\|\widehat{W}_{\I}\|=C_f\leq 1$.

As is well-known, for any $f \neq 0$, $\| \W(f) - \idty \| = 2$, see e.g. \cite{Brat-Rob2}[Proposition 5.2.4 (5)], and
so we estimate in the strong operator topology. More precisely, let $\kappa \geq 0$ be an integer and
\begin{equation} \label{Hkappa}
\mathcal{H}^{(\kappa)} = \mbox{span}\{\psi_\alpha: \|\alpha\|_\infty \le \kappa \}.
\end{equation}
This is a reducing subspace for $H_\Lambda$ and we will write $\mathcal{D}^{(\kappa)}$ for the set of all normalized eigenvectors of $H_\Lambda$ in $\mathcal{H}^{(\kappa)}$. Note that by \eqref{simplespec} we have almost surely that {\it all} $\psi \in \mathcal{D}^{(\kappa)}$ are, up to a phase, of the form $\psi_\alpha$, $\|\alpha\|_\infty \le \kappa$ (and no additional eigenvectors are found by linear combination).

\begin{thm}[Quasi-Locality for Restricted Weyl Operators] \label{thm:quasiloc}
Let $X \subset \Lambda$ and $f : \Lambda \to \mathbb{C}$ satisfy ${\rm supp}(f) \subset X$.
Under assumption (\ref{def:eig-corr}), for any $\kappa \geq 0$ and $n \geq 0$, there is $\tilde{C}>0$
for which the bound
\begin{equation}\label{eq:quasi-locality}
\mathbb{E}\left(\sup_{\psi \in \mathcal{D}^{(\kappa)}} \sup_{t\in\mathbb{R}}\left\|\left(\tau_t(\W(f))-
\widehat{W} \right)_{\I}\psi \right\|\right)\leq \tilde{C}(1+ \kappa)^{1/3} (1+ \lambda_0^{1/2})^{4/3} | \partial X | \| f \|_{\infty}^{2/3} e^{- \mu n /3}
\end{equation}
holds. Here one may take
\begin{equation}
\tilde{C} = 2^{4/3} C \left( \sum_{z \in \mathbb{Z}^\nu} e^{- \mu |z|/6} \right)^4
\end{equation}
and we note that $C$ and $\mu$ are the constants in the eigencorrelator localization bound (\ref{def:eig-corr}).
\end{thm}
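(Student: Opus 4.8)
The plan is to reduce the quasi-locality estimate to a bound on $\|(\W(\X f_t) - \W(\idty_{X(n)}\X f_t))P_{\I}\psi\|$ and then to a Weyl-relations computation that expresses this difference in terms of inner products controlled by Lemma~\ref{lem:ave-dec-l2}. First I would use \eqref{e-proj-weyl} and \eqref{strictly-local-app} to write
\begin{equation*}
\left(\tau_t(\W(f)) - \widehat{W}\right)_{\I} = C_f\left(\W(\X f_t) - \W(\idty_{X(n)}\X f_t)\right)P_{\I},
\end{equation*}
using that $\W(\X f_t)$ and $\W(\idty_{X(n)}\X f_t)$ both commute with $P_{\I}$ by \eqref{eq:PI:commutewith-W} (the latter because $\idty_{X(n)}\X f_t$ still lies in the range of $\X$ after applying $\X$—one must be slightly careful here and instead write $\idty_{X(n)}\X f_t = \X g$ for a suitable $g$, or observe directly that $\W(\idty_{X(n)}\X f_t)$ is diagonal w.r.t.\ $P_{\I}$ by the same argument as in Lemma~\ref{lem:Weyl-restriction}). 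Since $C_f \le 1$, it suffices to bound $\|(\W(h) - \W(h'))P_{\I}\psi\|$ with $h = \X f_t$ and $h' = \idty_{X(n)}\X f_t$, so that $h - h' = \idty_{\Lambda \setminus X(n)}\X f_t =: r$ is the ``tail'' of $\X f_t$ outside the $n$-neighborhood.

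Next I would use the Weyl relations \eqref{eq:Weyl-Relations} to write $\W(h) = e^{\frac{i}{2}\IM\langle r, h'\rangle}\W(r)\W(h')$, hence $\W(h) - \W(h') = (e^{\frac{i}{2}\IM\langle r,h'\rangle}\W(r) - \idty)\W(h')$, and since $\W(h')$ is unitary,
\begin{equation*}
\|(\W(h) - \W(h'))P_{\I}\psi\| \le \|(\W(r) - \idty)P_{\I}\psi'\| + \tfrac12|\IM\langle r, h'\rangle|,
\end{equation*}
where $\psi' = \W(h')P_{\I}\psi$ is again a unit vector. The second term is bounded by $\tfrac12\|r\|\|\X f_t\| \le \tfrac12\|r\| \cdot C(1+\lambda_0^{1/2})\sum|f(x)|$-type quantities, but the cleaner route is to use Lemma~\ref{lem:ave-dec-l2} directly on $\langle r, h'\rangle$ and on the quantity controlling $\|(\W(r)-\idty)\psi'\|$. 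For the first term, the key estimate is that for a unit vector $\psi'$ in the range of $P_{\I}$ (more precisely in $\mathcal{D}^{(\kappa)}$, or its Weyl-translate, which still has bounded occupation numbers—here the restriction to $\mathcal{H}^{(\kappa)}$ enters), one has
\begin{equation*}
\|(\W(r) - \idty)\psi'\|^2 \le \|(q(r) + p(r))\psi'\|^2 \lesssim (1+\kappa)\|Vr\|^2
\end{equation*}
via the standard bound $\|\W(g) - \idty\| \le \|\text{(generator)}\|$ on analytic vectors together with the fact that $b_j^*b_j$ acts as $\le \kappa$ on $\mathcal{H}^{(\kappa)}$; this is where the factor $(1+\kappa)$ comes from. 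One then has $\|Vr\|^2 = \langle f_t, \X\idty_{\Lambda\setminus X(n)}\X f_t\rangle$-type expressions (using $V^*V$ relating to $h_\Lambda^{\pm 1}$ and $\X = V^{-1}\idty_{S_{\lambda_0}}V$), which after taking $\E\sup_t$ are bounded by Lemma~\ref{lem:ave-dec-l2} by $C(1+\lambda_0^{1/2})^2\sum_{x,y}|f(x)|\idty_{\Lambda\setminus X(n)}(y)\cdots$, wait—more carefully, one gets a sum over $x \in \supp f \subset X$ and $y \in \Lambda \setminus X(n)$ of $e^{-\mu|x-y|}$, and since $\text{dist}(X, \Lambda\setminus X(n)) > n$ one extracts $e^{-\mu n}$ from each such term, with the remaining geometric sum giving $|\partial X|(\sum_z e^{-\mu|z|/2})^2$ or similar—this accounts for the $|\partial X|$ and the lattice-sum constants, after splitting $e^{-\mu|x-y|} = e^{-\mu|x-y|/2}e^{-\mu|x-y|/2}$.

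Finally, assembling the pieces: I would have a bound of the form $\E\sup\|(\tau_t(\W(f)) - \widehat W)_{\I}\psi\| \lesssim (1+\kappa)^{1/2}(1+\lambda_0^{1/2})(|\partial X|\|f\|_\infty)^{1/2}e^{-\mu n/2}$ from the first term (taking a square root of the $\E\sup$ of the squared quantity via Cauchy–Schwarz/Jensen, which is why powers like $1/2$ appear) and a similar but better-decaying bound from the $\IM\langle r, h'\rangle$ term; the slightly awkward exponents $(1+\kappa)^{1/3}$, $(1+\lambda_0^{1/2})^{4/3}$, $\|f\|_\infty^{2/3}$, $e^{-\mu n/3}$ in the statement come from interpolating/optimizing—likely one uses $\min(2, x) \le 2^{1/3}x^{2/3}$ on $\|\W(r)-\idty\|$ so that the trivial bound $2$ and the linear bound can be combined, turning a $1/2$-power into a $1/3$-power while keeping the prefactor finite. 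The main obstacle I anticipate is precisely this bookkeeping: controlling $\|(\W(r)-\idty)\psi'\|$ uniformly over the Weyl-translated family—one must check that $\psi' = \W(\idty_{X(n)}\X f_t)P_{\I}\psi$ still lies in (or close to) a space with occupation number bounds so that the $\kappa$-dependent estimate on the number operator applies, and more fundamentally that $\W(g)$ maps $\mathcal{H}^{(\kappa)}$ boundedly in the relevant seminorm even though Weyl operators do not preserve $\mathcal{H}^{(\kappa)}$; handling this via analytic-vector estimates and the explicit action of $b_j, b_j^*$ on the Fock-type basis, together with the fractional-power interpolation to reconcile the exponents, is the delicate part.
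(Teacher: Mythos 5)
Your overall architecture matches the paper's: reduce to $\|(\W(\mathrm{tail})-\idty)\psi\|$, bound this by the generator acting on $\psi$ to pick up $\sqrt{2(\kappa+1)}\,\|V(\mathrm{tail})\|_2$, control $\|V(\mathrm{tail})\|_2^2$ via the eigencorrelator bound, and interpolate with the trivial bound $2$ via $\min(2,x)\le 2^{1/3}x^{2/3}$ to produce the $1/3$ and $2/3$ exponents --- all of which you correctly identify. But there is a genuine gap at exactly the step you flag as ``the delicate part,'' and it is not merely bookkeeping. Your decomposition $\W(h)-\W(h')=(e^{i\phi}\W(r)-\idty)\W(h')$ places the unitary $\W(h')$ on the \emph{right}, so the difference operator acts on $\psi'=\W(h')P_\I\psi$. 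The estimate $\|b^*(Vr)\psi\|^2\le(\|\alpha\|_\infty+1)\|Vr\|_2^2$ uses the explicit action of $b_j^*$ on the number basis and requires $\psi$ to lie in $\mathcal{H}^{(\kappa)}$; the displaced vector $\psi'$ does not, and Weyl operators do not preserve occupation-number cutoffs, so the factor $(1+\kappa)$ is not available for $\psi'$. The paper sidesteps this entirely by factoring the unitary on the \emph{left}: $(\tau_t(\W(f))-\widehat W)_\I = C_f\,\W(\X\idty_{X(n)}\X f_t)\,(\W(f_{n,t})-\idty)P_\I$ with $f_{n,t}=\X\idty_{\Lambda\setminus X(n)}\X f_t$ (the phase vanishes because $\langle \X\idty_{X(n)}\X f_t,\X f_t\rangle$ is real), so that the difference operator hits the eigenstate $\psi_\alpha$ directly. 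Your argument can be repaired by writing $\W(h)-\W(h')=\W(h')(\W(-h')\W(h)-\idty)$ instead, but as written it does not close.

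Two further points. First, your identity $(\tau_t(\W(f))-\widehat W)_\I=C_f(\W(\X f_t)-\W(\idty_{X(n)}\X f_t))P_\I$ is not correct: applying Lemma~\ref{lem:Weyl-restriction} to $\W(\idty_{X(n)}\X f_t)$ produces the constant $C_{\idty_{X(n)}\X f_t}$ and the operator $\W(\X\idty_{X(n)}\X f_t)$ --- this is precisely why $\widehat C$ is chosen as in \eqref{strictly-local-app}. The outer $\X$ on the tail matters: without it, $\|Vr\|_2^2$ involves matrix elements $\langle\delta_z,h_\Lambda^{-1/2}\delta_{z'}\rangle$ \emph{without} the spectral cutoff, which are controlled neither by \eqref{def:eig-corr} nor deterministically (recall $h_\Lambda^{-1}$ has no uniform norm bound). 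Your suggested fixes (writing $\idty_{X(n)}\X f_t=\X g$, or claiming $\W(\idty_{X(n)}\X f_t)$ commutes with $P_\I$) do not work, since $\idty_{X(n)}$ destroys the spectral localization. Second, after expanding $\|Vf_{n,t}\|_2^2=\sum_{z,z'}\langle\X f_t,\delta_z\rangle\langle\delta_z,h_\Lambda^{-1/2}\X\delta_{z'}\rangle\langle\delta_{z'},\X f_t\rangle$, one faces a product of three \emph{correlated} random factors; $\mathbb{E}$ of the product is not the product of the bounds. The reason the paper takes the $2/3$ power \emph{before} averaging is that each factor then appears to the power $1/3$, so the three-fold H\"older inequality reduces everything to first moments, each controlled by Lemma~\ref{lem:ave-dec-l2} or \eqref{def:eig-corr}. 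Your sketch, which would need second or third moments of $\langle\X f_t,\delta_z\rangle$, cannot be completed with the available hypotheses; the choice of exponent $2/3$ is forced by this, not by optimization.
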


The key statement in this result is the exponential decay of the right hand side of \eqref{eq:quasi-locality} in $n$, uniformly in time. Note that such quasi-locality bounds have not been established in the context of disordered oscillator systems before and that Theorem~\ref{thm:quasiloc} is new even for the fully localized case where $P_\I=\idty$ and one can choose $\lambda_0 = 4\nu + \|k\|_{\infty}$ as a bound on $\|h_\Lambda\|$.

\begin{proof}[Proof of Theorem~\ref{thm:quasiloc}]
The first step in our proof of (\ref{eq:quasi-locality}) involves a simple norm bound.
In fact, for any self-adjoint operator $A$ and each $\psi$ in its domain, the bound
\begin{equation}
\| (e^{iA} - \idty) \psi \| = \left\| \int_0^1 \left( \frac{d}{ds} e^{isA} \right) \, \psi \, ds \right\| \leq \| A \psi \|
\end{equation}
is clear. For our application to Weyl operators, take $\psi_{\alpha}$, as in (\ref{eigenvectors}), and note that for any
$f : \Lambda \to \mathbb{C}$ one has that
\begin{eqnarray} \label{Weyl_dif_bd}
\| ( \W(f) - \idty) \psi_{\alpha} \| & \leq & \frac{1}{\sqrt{2}} \left( \| b(Vf) \psi_{\alpha} \| + \| b^*(Vf) \psi_{\alpha} \| \right) \\
& \leq & \sqrt{2 (\| \alpha \|_{\infty} +1) } \| Vf \|_2 \notag.
\end{eqnarray}
The final bound above uses e.g.\ that
\begin{equation}
\| b^*(Vf) \psi_{\alpha} \|^2 = \left\| \sum_{j=1}^{| \Lambda|} (Vf)(j) \sqrt{\alpha_j+1} \psi_{\alpha + e_j} \right\|^2 \leq (\| \alpha \|_{\infty} +1) \| Vf \|_2^2 \, ,
\end{equation}
valid since the collection $\{ \psi_{\alpha + e_j} \}$ is orthonormal, and a similar bound for $\| b(Vf) \psi_{\alpha} \|$.

For $\widehat{W}$ as defined in (\ref{strictly-local-app}), it is clear that
\begin{eqnarray} \label{sl-app}
( \tau_t(\W(f)) - \widehat{W})_{\mathcal{I}} & = & C_{f_t} \W( \X f_t) P_{\I} - \widehat{C} \cdot C_{\idty_{X(n)} \X f_t} \W( \X \idty_{X(n)} \X f_t) P_{\I} \\
& = &  C_f \W( \X \idty_{X(n)} \X f_t) \left( \W( -\X \idty_{X(n)} \X f_t)  \W( \X f_t) - \idty \right)P_{\I} \nonumber \\
& = &  C_f \W( \X \idty_{X(n)} \X f_t) \left( \W(f_{n,t}) - \idty \right)P_{\I} \nonumber
\end{eqnarray}
where, for the last line above, we used the Weyl relations (\ref{eq:Weyl-Relations}) and 
set $f_{n,t} = \X \idty_{\Lambda \setminus X(n)} \X f_t$ (note that $\langle -\X \idty_{X(n)} \X f_t, \X f_t\rangle$ is real, so that no phase appears). The bound in (\ref{Weyl_dif_bd}) then immediately yields
that
\begin{equation} \label{ql-est-1}
\| ( \tau_t(\W(f)) - \widehat{W})_{\mathcal{I}} \psi_{\alpha} \| \leq \sqrt{2( \| \alpha \|_{\infty} +1)} \| Vf_{n,t} \|_2
\end{equation}
for any $\psi_{\alpha}$ in the range of $P_{\mathcal{I}}$.

Now a short calculation, using that $V \X = \idty_{S_{\lambda_0}}V$, see e.g. (\ref{eq:Vinv-Chi-V}), shows that
\begin{eqnarray} \label{exp-l2-norm}
\| V f_{n,t} \|_2^2 & = & \langle V \X \idty_{\Lambda \setminus X(n)} \X f_t, V \X \idty_{\Lambda \setminus X(n)} \X f_t \rangle \nonumber \\
& = & \sum_{z,z' \in \Lambda \setminus X(n)} \langle \X f_t, \delta_z \rangle \langle V \delta_z, \idty_{S_{\lambda_0}} V \delta_{z'} \rangle \langle \delta_{z'}, \X f_t \rangle
\end{eqnarray}
In this case, we conclude that
\begin{eqnarray} \label{ql-est-2}
\| ( \tau_t(\W(f)) - \widehat{W})_{\mathcal{I}} \psi_{\alpha} \| & \leq & 2^{1/3} \| ( \tau_t(\W(f)) - \widehat{W})_{\mathcal{I}} \psi_{\alpha} \|^{2/3} \nonumber \\
& \leq & 4^{1/3} ( \| \alpha \|_{\infty} + 1)^{1/3} \| Vf_{n,t} \|_2^{2/3} \nonumber \\
& \leq & C_{\kappa} \sum_{z,z' \in \Lambda \setminus X(n)}
|\langle \X f_t, \delta_z \rangle|^{1/3} |\langle V \delta_z,  \idty_{S_{\lambda_0}} V \delta_{z'} \rangle|^{1/3} |\langle \delta_{z'}, \X f_t \rangle |^{1/3} \, .
\end{eqnarray}
Here we have used (\ref{ql-est-1}) and (\ref{exp-l2-norm}). In fact,
the right-hand-side above is uniform for $\psi_{\alpha} \in \mathcal{D}^{(\kappa)}$ and we have set
$C_{\kappa} = 4^{1/3} (\kappa + 1)^{1/3}$. An application of the H\"older inequality shows that
the left-hand-side of (\ref{eq:quasi-locality}) is bounded above by
\begin{equation} \label{ql-holder-bd}
C_{\kappa} \sum_{z,z' \in \Lambda \setminus X(n)} \mathbb{E} \left( \sup_{t \in \mathbb{R}} | \langle \X f_t, \delta_z \rangle | \right)^{1/3}
 \mathbb{E} \left(  |\langle V \delta_z,  \idty_{S_{\lambda_0}} V \delta_{z'} \rangle| \right)^{1/3}
 \mathbb{E} \left( \sup_{t \in \mathbb{R}} | \langle \delta_{z'}, \X f_t \rangle | \right)^{1/3}
\end{equation}

We are now in position to apply Lemma~\ref{lem:ave-dec-l2}. In fact, by \eqref{def:Weyl:Vf} and \eqref{diagh} we have
\begin{equation}
\langle V \delta_z, \idty_{S_{\lambda_0}} V \delta_{z'} \rangle = \langle \delta_z, h_{\Lambda}^{-1/2} \X \delta_{z'} \rangle
\end{equation}
and thus can bound the middle term above directly by \eqref{def:eig-corr}. We find the following upper bound on the quantity in (\ref{ql-holder-bd}) above, 
\begin{equation} \label{ql-lem-est}
2^{2/3} C_{\kappa} C (1+ \lambda_0^{1/2})^{4/3}  \sum_{z,z' \in \Lambda \setminus X(n)} \sum_{x,y} |f(x)|^{1/3} |f(y)|^{1/3} e^{- \mu|x-z|/3} e^{- \mu|z-z'|/3}  e^{- \mu|z'-y|/3}
\end{equation}

To obtain the bound claimed in (\ref{eq:quasi-locality}), we argue as follows. First, estimate the functions values by $\| f \|_{\infty}$.
Next, extract some decay in $n$, more precisely a factor of $e^{- \mu n/6}$, from those exponential terms involving $x$
as well as those involving $y$. Finally, sum on $y$; then sum on $z'$. This results in an upper bound of
\begin{equation}
2^{2/3} C_{\kappa} C (1+ \lambda_0^{1/2})^{4/3} \| f \|_{\infty}^{2/3} e^{- \mu n /3} \left( \sum_z e^{- \mu |z|/6} \right)^2 \cdot \sum_{x \in X} \sum_{z \in \Lambda \setminus X(n)} e^{- \mu |x-z|/6}
\end{equation}
for the quantity in (\ref{ql-lem-est}). Now, for each $x \in X$ and $z \in \Lambda \setminus X(n)$
the bound
\begin{equation}
e^{- \mu |x-z|/6}  \leq \sum_{w \in \partial X} e^{- \mu |x-w|/6} e^{- \mu |w-z|/6}
\end{equation}
is clear since equality is obtained for some $w \in \partial X$. The bound claimed in (\ref{eq:quasi-locality}) now follows.
\end{proof}

\section{Exponential decay of dynamic correlations of the Weyl operators} \label{sec:cordecay}

Our third main result establishes exponential decay of dynamic correlations of the Weyl operators in the regime of localized excitations, i.e., in the range of the projection $P_\I$ defined through \eqref{restrict1}, \eqref{restrict2} and \eqref{restrict3} above.

For a normalized $\psi \in \mathcal{H}_\Lambda$, any Weyl operators $\W(f)$ and $\W(g)$ for functions $f, g:\Lambda\rightarrow\C$, and time $t\in \R$, we define the $P_\I$-restricted dynamic correlation
\begin{equation} \label{def:cor}
C_\psi^{\I}(f,g,t) := \langle \psi, \tau_t(\W(f)_\I) \W(g)_\I \psi \rangle - \langle \psi, \tau_t(\W(f)_\I) \psi \rangle \langle \psi, \W(g)_\I \psi \rangle.
\end{equation}
For eigenstates $\psi$ of $\mathcal{H}_\Lambda$, the case considered below, we may simplify $\langle \psi, \tau_t(\W(f)_\I) \psi \rangle = \langle \psi, \W(f)_\I \psi\rangle$ on the right of \eqref{def:cor}.

As before, $\mathcal{D}^{(\kappa)}$ denotes the set of all normalized eigenvectors of $H_\Lambda$ in the subspace $\mathcal{H}^{(\kappa)}$ given by \eqref{Hkappa}, i.e., the localized excitations with occupation number bounded by $\kappa$.

\begin{thm}[Exponential Decay of Dynamic Correlations of the Weyl Operators] \label{thm:Weyl-Eigen-ExpDecay}
For all $\kappa \in \N_0$ and functions $f,g:\Lambda \to \C$,
\begin{equation}
\mathbb{E}\left(\sup_{\psi \in \mathcal{D}^{(\kappa)}} \sup_{t\in\mathbb{R}}|C_{\psi}^\I (f,g,t)|\right)\leq 8C(1+\lambda_0^{1/2})^2\big(\sum_{x,y\in\Lambda}|f(x)g(y)| e^{-\mu|x-y|}\big)^{\frac{1}{\kappa+1}}
\end{equation}
where  $C$ and $\mu$ are the constants in the eigencorrelator localization bound (\ref{def:eig-corr})
\end{thm}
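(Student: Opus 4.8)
The plan is to reduce the correlation estimate to the quasi-locality bound of Theorem~\ref{thm:quasiloc} together with the restriction identity of Lemma~\ref{lem:Weyl-restriction}. First I would fix an eigenstate $\psi = \psi_\alpha \in \mathcal{D}^{(\kappa)}$, so that $\|\alpha\|_\infty \le \kappa$, and use $\tau_t(\W(f)_\I)\psi = e^{2it E_\alpha \cdot 0}$-type simplification — more precisely, since $\psi$ is an eigenvector, $\langle \psi, \tau_t(\W(f)_\I)\psi\rangle = \langle \psi, \W(f)_\I \psi\rangle$ as noted after \eqref{def:cor}. The strategy for the truncated dynamics is to approximate $\tau_t(\W(f)_\I)$ by its strictly local surrogate $\widehat{W}_\I$ supported in a neighborhood $X(n)$ of $\supp f$, where $n$ will be chosen at the end to optimize the bound, and similarly one may leave $\W(g)_\I$ as is (or approximate it, but it is already local). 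Writing $\tau_t(\W(f)_\I) = \widehat{W}_\I + R_{t}$ with $\|R_t \psi\|$ controlled by Theorem~\ref{thm:quasiloc}, and using $|\langle\psi, AB\psi\rangle - \langle\psi,A\psi\rangle\langle\psi,B\psi\rangle|$ bilinearity plus the operator-norm bound $\|\W(g)_\I\| \le 1$, the correlation $C_\psi^\I(f,g,t)$ splits into a term involving only the local approximant $\widehat{W}_\I$ and error terms each bounded by $C\|R_t\psi\|$ (after using $\|\W(f)_\I\|,\|\W(g)_\I\|\le 1$ and $\|\psi\|=1$).

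The key point is that once $\widehat W$ has support in $X(n)$ and $\W(g)$ has support in $\supp g$, if these two sets are disjoint — which holds when $n$ is smaller than roughly $\mathrm{dist}(\supp f, \supp g)$ — then $\widehat{W}$ and $\W(g)$ commute as operators on $\mathcal{H}_\Lambda$, and hence so do $\widehat{W}_\I = P_\I \widehat W P_\I = \widehat C\, \W(\idty_{X(n)}\X f_t) P_\I$ and $\W(g)_\I$ after we also invoke \eqref{eq:PI:commutewith-W} to move the projections past the Weyl factors. For commuting $A_\I = \widehat{W}_\I$, $B_\I = \W(g)_\I$ acting in an eigenstate, the ``connected'' part $\langle \psi, A_\I B_\I \psi\rangle - \langle\psi, A_\I\psi\rangle\langle\psi,B_\I\psi\rangle$ does \emph{not} automatically vanish (the $\W$'s are not eigenstate-diagonal), so I expect one still has to run a Cauchy–Schwarz / variance argument: bound $|C_\psi^\I| \le \|(\widehat W_\I - \langle\psi,\widehat W_\I\psi\rangle)\psi\|\,\|(\W(g)_\I - \langle\psi,\W(g)_\I\psi\rangle)\psi\|$ and then control the first factor using that $\widehat W_\I \psi$ lies, up to the projection, in a controlled excitation sector. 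Alternatively, and probably cleaner, one keeps everything in terms of the quasi-locality error and the elementary bound $\| (\W(h) - \idty)\psi_\alpha\| \le \sqrt{2(\|\alpha\|_\infty+1)}\,\|Vh\|_2$ from \eqref{Weyl_dif_bd}, applied to the relevant displacement after moving $\W(g)_\I$ past $\widehat W_\I$ via the Weyl relations; the disjoint-support commutation makes the relevant symplectic form $\IM[\langle\cdot,\cdot\rangle]$ vanish, so no extra phase appears and the cross term is genuinely of quasi-locality type.

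Carrying this out, the total bound before taking expectations and before choosing $n$ has the shape
\begin{equation}
\sup_{t}|C_\psi^\I(f,g,t)| \le c\,(\kappa+1)^{1/3}(1+\lambda_0^{1/2})^{4/3}\,|\partial X|\,\|f\|_\infty^{2/3}\, e^{-\mu n/3} + (\text{term local in } X(n) \text{ and }\supp g),
\end{equation}
where the first piece is the quasi-locality error from Theorem~\ref{thm:quasiloc} and the second piece vanishes (or is itself exponentially small in $\mathrm{dist}(\supp f,\supp g) - n$) once $X(n)\cap \supp g = \emptyset$. Optimizing by choosing $n \approx \mathrm{dist}(\supp f,\supp g)/2$ — or more precisely splitting the $e^{-\mu|x-y|}$ factor and iterating the Hölder/geometric-series bookkeeping as in the proof of Theorem~\ref{thm:quasiloc} — converts the $e^{-\mu n/3}$ into the $\big(\sum_{x,y}|f(x)g(y)|e^{-\mu|x-y|}\big)^{1/(\kappa+1)}$ appearing in the statement, with the exponent $1/(\kappa+1)$ arising from iterating the $A^{2/3}$-type interpolation inequality $\kappa$-many times (rather than once as in Theorem~\ref{thm:quasiloc}) in order to absorb the growing factor $(\|\alpha\|_\infty+1)\le \kappa+1$ into a small power of the exponentially decaying sum.

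The main obstacle I anticipate is the bookkeeping that produces the exact exponent $\tfrac{1}{\kappa+1}$ and the clean constant $8C(1+\lambda_0^{1/2})^2$: one must interpolate the $L^2$-norm bound $\|V f_{n,t}\|_2$ against the trivial bound $2$ on $\|(\W(h)-\idty)_\I\|$ not once but repeatedly, each time trading a power of the ($\kappa$-dependent) prefactor for a fractional power of $\|Vf_{n,t}\|_2^{\,\cdot}$, and then feeding Lemma~\ref{lem:ave-dec-l2} in at the right fractional power. Keeping track of where the sums over the $\Lambda\setminus X(n)$ variables get absorbed (via the summability $\sum_{z\in\Z^\nu} e^{-\mu|z|/\text{const}} < \infty$), while still ending with a bound expressed solely in terms of $\sum_{x,y}|f(x)g(y)|e^{-\mu|x-y|}$ raised to the stated power, is the delicate part; the geometric/structural ideas (local approximation, disjoint-support commutation, eigenstate simplification, interpolation) are otherwise the same as in Theorems~\ref{thm:Weyl} and~\ref{thm:quasiloc}.
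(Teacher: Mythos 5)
The central step of your argument does not close. You reduce everything to approximating $\tau_t(\W(f)_\I)$ by a strictly local $\widehat W_\I$ supported in $X(n)$ and then observing that $\widehat W$ and $\W(g)$ commute once $X(n)\cap\supp g=\emptyset$. But, as you yourself note, commutation of two observables does not make their connected correlation in an excited eigenstate vanish or even small: the variance $\|(\widehat W_\I - \langle\psi,\widehat W_\I\psi\rangle)\psi\|$ of a local Weyl operator in $\psi_\alpha$ is generically of order one, so the Cauchy--Schwarz bound you propose yields nothing, and the elementary estimate \eqref{Weyl_dif_bd} compares $\W(h)$ to the \emph{identity}, not to its expectation in $\psi_\alpha$, so it cannot control a variance either. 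Quasi-locality of the dynamics (Theorem~\ref{thm:quasiloc}) only controls the spreading of supports; it carries no information about why two far-apart commuting observables should be uncorrelated in a given many-body eigenstate (for a generic eigenstate they need not be). Consequently the ``term local in $X(n)$ and $\supp g$'' in your displayed estimate is not small for any choice of $n$, and the scheme collapses at exactly the point you flagged as uncertain.

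The ingredient you are missing is the product structure of the eigenstates over the normal modes. The paper's proof never uses Theorem~\ref{thm:quasiloc}: it expands
$C_\alpha^\I(f,g,t)=C_fC_g\sum_{\beta\neq\alpha}\langle\psi_\alpha,\W(\X f_t)\psi_\beta\rangle\langle\psi_\beta,\W(\X g)\psi_\alpha\rangle$
and factorizes each matrix element over the modes via Lemma~\ref{thm:Weyl:entries}. Every off-diagonal term ($\beta\neq\alpha$) then carries, by the explicit Laguerre-polynomial bounds of Lemma~\ref{lem:Weylmatrix}, at least one factor $|\eta_{j,t}\xi_j|=|(V\X f_t)(j)(V\X g)(j)|$, and localization of the modes (Lemma~\ref{Corre:lem}, a variant of Lemma~\ref{lem:ave-dec-l2}) makes $\E\bigl(\sup_t\sum_j|\eta_{j,t}\xi_j|\bigr)$ bounded by the eigencorrelator sum. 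This is the actual mechanism for decay: when $f$ and $g$ are far apart, their images under $V$ essentially live on disjoint sets of localized modes. Finally, the exponent $1/(\kappa+1)$ does not arise from iterating a $2/3$-type interpolation $\kappa$ times; it is applied once, to tame a constant of size roughly $2^{\kappa+1}$ coming from the Laguerre bounds, combined with a Chebyshev estimate on the event $\{\sup_t\sum_j|\eta_{j,t}\xi_j|>1\}$ and Jensen's inequality.
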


We comment that in the fully localized large disorder regime, where $P_\I = \idty$, exponential decay of dynamic correlations of the Weyl operators and of local position and momentum operators was proven in \cite{NSS1}, but only for the ground state as well as for thermal states. For position and momentum operators and in the fully localized regime this was extended in \cite{ARSS17} to excited states. The latter work also considers the dynamics of correlations under a quantum quench, i.e., that the correlations remain exponentially decaying, uniformly in time, if the initial state is a product of either ground states or thermal states of subsystems.

In the remainder of this section we prove Theorem \ref{thm:Weyl-Eigen-ExpDecay}.

Due to simplicity \eqref{simplespec} of the spectrum of $H_\Lambda$, almost surely, all $\psi \in \mathcal{D}^{(\alpha)}$ are of the form $\psi_\alpha$ with $\|\alpha\|_\infty \le \kappa$. Thus we only need to study the special correlations $C^{\I}_\alpha(f,g,t) := C^{\I}_{\psi_\alpha}(f,g,t)$ and find a bound for
\begin{equation} \label{needtobound}
\E\left(\sup_{\alpha\,:\, \|\alpha\|_\infty \le \kappa} \sup_t \:C^{\I}_\alpha(f,g,t)\right).
\end{equation}

From (\ref{eq:PIWPI}) and (\ref{eq:PI:commutewith-W}) we get, using $C_{f_t} = C_f$ as remarked after \eqref{e-proj-weyl},
\begin{eqnarray} \label{corexpansion}
C^{\I}_\alpha(f,g,t)&=& C_f C_g\left(\langle \psi_\alpha, \W(\X f_t) \W(\X g) \psi_\alpha \rangle-\langle \psi_\alpha \W(\X f_t)\psi_\alpha \rangle \langle \psi_\alpha, \W(\X g) \psi_\alpha \rangle\right),\\
&=& C_{f}C_g\sum_{\beta\in\mathbb{N}_0^{|\Lambda|};\ \beta\neq \alpha} \langle\psi_\alpha, \W(\X f_t)\psi_\beta\rangle \langle\psi_\beta, \W(\X g)\psi_\alpha\rangle \nonumber\\
&=& C_{f}C_g\sum_{\beta\in\mathbb{N}_0^{|\Lambda|};\ \beta\neq \alpha}\ \ \ \prod_{\ell=1}^{|\Lambda|}\langle\alpha_\ell|\W_{\eta_{\ell,t}}|\beta_\ell\rangle \langle\beta_\ell|\W_{\xi_\ell}|\alpha_\ell\rangle, \notag \\
&=& C_{f}C_g\sum_{\beta\in\mathbb{N}_0^{|\Lambda|};\ \beta\neq \alpha}\ \ \ \prod_{\ell=1}^{|\Lambda|} W_{\alpha_\ell,\beta_\ell}^{\eta_{\ell,t}, \xi_\ell}, \notag
\end{eqnarray}
where the last step used the product formula (\ref{eq:Weyl:product})  for the Weyl operator expectations and we have set
\begin{equation} \label{etaxi}
\eta_{\ell,t}:=(V\X f_t)(\ell),\quad \xi_\ell:=(V\X g)(\ell),\text{ for }\ell\in\{1,\ldots,|\Lambda|\}
\end{equation}
as well as
\begin{equation}
W_{r,s}^{z,z'} := \langle r| \W_z|s \rangle \langle s| \W_{z'}|r \rangle.
\end{equation}
The summation over $\{\beta \in \N_0^{|\Lambda|}: \beta \not= \alpha\}$ in \eqref{corexpansion} can be decomposed into sums over the disjoint sets
\begin{eqnarray}
S_j & := & \left\{\beta \in \N_0^{|\Lambda|}: \beta_m = \alpha_m \:\: \mbox{for}\:\: m=1,\ldots,j-1, \right.  \\
& & \hspace{2cm}  \left. \beta_j \in \N_0 \setminus \{\alpha_j\}, \: \beta_m \in \N_0 \:\: \mbox{for} \:\: m=j+1,\ldots, |\Lambda| \right\}, \notag
\end{eqnarray}
$j= 1, \ldots, |\Lambda|$. For each $j$,
\begin{eqnarray} \label{kterm}
\sum_{\beta \in S_j} \prod_{\ell=1}^{|\Lambda|} W_{\alpha_\ell,\beta_\ell}^{\eta_{\ell,t}, \xi_\ell}
& = & \left( \prod_{\ell<j} W_{\alpha_\ell,\alpha_\ell}^{\eta_{\ell,t}, \xi_\ell} \right) \left( \sum_{\beta_j \not= \alpha_j} W_{\alpha_j, \beta_j}^{\eta_{j,t}, \xi_j} \right) \sum_{\beta_{j+1},\ldots, \beta_{|\Lambda|}} \left( \prod_{\ell>j} W_{\alpha_\ell, \beta_\ell}^{\eta_{\ell,t}, \xi_\ell} \right) \\
& = & \left( \prod_{\ell<j} W_{\alpha_\ell,\alpha_\ell}^{\eta_{\ell,t}, \xi_\ell} \right) \left( \sum_{\beta_j \not= \alpha_j} W_{\alpha_j, \beta_j}^{\eta_{j,t}, \xi_j} \right) \left(\prod_{\ell>j} \sum_{\beta_\ell \in \N_0} W_{\alpha_\ell, \beta_\ell}^{\eta_{\ell,t}, \xi_\ell} \right) \notag
\end{eqnarray}
We now use the fact that the Weyl operators are unitary and  the Weyl relations (\ref{eq:Weyl-Relations}) to get that for every $\ell\in\Lambda$,
\begin{equation} \label{simplebound}
|W_{\alpha_\ell,\alpha_\ell}^{\eta_{\ell,t},\xi_\ell}|\leq 1 \text{  and  }
\left|\sum_{\beta_\ell\in \N_0} W_{\alpha_\ell,\beta_\ell}^{\eta_{\ell,t},\xi_\ell}
\right|=
|\langle\alpha_\ell|\W_{\eta_{\ell,t}}\W_{\xi_{\ell}}|\alpha_\ell\rangle|
=
|\langle\alpha_\ell|\W_{\eta_{\ell,t}+\xi_\ell}|\alpha_\ell\rangle|
\leq 1.
\end{equation}
From \eqref{kterm} and \eqref{simplebound} we find that the absolute value of (\ref{corexpansion}) can be bounded by
\begin{equation}\label{Corre:proof:bound:sum}
|C^{\I}_\alpha(f,g,t)|\leq  \sum_{j=1}^{|\Lambda|} \ \
\sum_{\beta_j\in\mathbb{N}_0 \setminus \{\alpha_j\}}
|W_{\alpha_j,\beta_j}^{\eta_{j,t},\xi_j}|.
\end{equation}
In the following we deal with the sum over $\beta_j$ and, for simplicity, we suppress the subscripts $j$ and $t$.  Using the explicit formulas for the matrix elements of the Weyl operators in Lemma~\ref{lem:Weylmatrix}, we find
\begin{eqnarray}\label{Corre:proof:prod-bound}
|W_{\alpha,\beta}^{\eta,\xi}|&=&
\frac{\min\{\alpha ,\beta \}!}{\max\{\alpha ,\beta \}!}\left(\frac{|\eta \xi |}{2}\right)^{|\alpha -\beta|}
L^{(|\alpha -\beta |)}_{\min\{\alpha ,\beta \}}\left(\frac{|\eta |^2}{2}\right)
 L^{(|\alpha -\beta |)}_{\min\{\alpha ,\beta \}}\left(\frac{|\xi |^2}{2}\right)
e^{-\frac{1}{4}(|\eta |^2+|\xi|^2)} \\
&\leq& \binom{\max\{\alpha ,\beta \}}{\min\{\alpha ,\beta \}}\left(\frac{|\eta \xi |}{2}\right)^{|\alpha -\beta|} \notag
\end{eqnarray}
where we used that, for any $\alpha,n\in\mathbb{N}_0$ and $x\geq 0$, generalized Laguerre polynomials satisfy the bound (e.g.\ \cite{Handbook})
\begin{equation} \label{Laguerrebound}
|L^{(\alpha)}_n(x)|\leq \binom{\alpha+n}{n} e^{\frac{x}{2}}.
\end{equation}

To proceed with finding a bound for \eqref{needtobound}, we now take the suprema over $\alpha$ and $t$ in \eqref{Corre:proof:bound:sum} and estimate averages separately over
\begin{equation}
\Omega:=\{k \in (0,\infty)^{\Lambda}:\ \sup_t\sum_{j=1}^{|\Lambda|} |\eta_{j,t} \xi_j|\leq 1\}
 \end{equation}
 and its complement $\Omega^c$. Since $|C^{\I}_\alpha(f,g,t)|\leq 2$ as well as $|C^{\I}_\alpha(f,g,t)|\leq 2 |C^{\I}_\alpha(f,g,t)|^{1/(\kappa+1)}$, for all $f,g:\Lambda\rightarrow \mathbb{C}$ and $t\in\mathbb{R}$, we obtain
\begin{eqnarray} \label{Corre:proof:main-bound}
\mathbb{E}\left(\sup_{\alpha, t} |C^{\I}_\alpha(f,g,t)|\right) & \leq &
\mathbb{E}\left( \chi_{\Omega} \sup_{\alpha,t} |C^{\I}_\alpha(f,g,t)|\right)+2\ \mathbb{P}\left(\Omega^c \right) \\
& \le & 2 \mathbb{E}\left(\chi_{\Omega} \sup_{\alpha,t} |C^{\I}_\alpha(f,g,t)|^{1/(\kappa+1)} \right) + 2\mathbb{P}(\Omega^c)^{1/(\kappa+1)} . \notag
\end{eqnarray}
 For the second term in (\ref{Corre:proof:main-bound}),
Chebyshev's inequality gives
\begin{equation} \label{Corre:proof:term2}
\mathbb{P}\left(\Omega^c \right) \leq \mathbb{E}\left(\sup_t\sum_{j=1}^{|\Lambda|} |\eta_{j,t} \xi_j|\right).
\end{equation}

We now show that a similar bound holds for the first term in (\ref{Corre:proof:main-bound}). Here we will use the detailed bounds found in \eqref{Corre:proof:bound:sum}, \eqref{Corre:proof:prod-bound} and \eqref{Laguerrebound}.  On the set $\Omega$, using that $|\alpha-\beta|\geq 1$ for all the terms appearing in the sum (\ref{Corre:proof:bound:sum}), we get
\begin{equation}\label{Corre:proof:eta-xi-bound}
|\eta \xi|^{|\alpha-\beta|}\leq |\eta \xi|.
\end{equation}
Then in (\ref{Corre:proof:bound:sum}) we split the sum
and obtain, using (\ref{Corre:proof:prod-bound}), (\ref{Corre:proof:eta-xi-bound}) and $\|\alpha\|_\infty\le\kappa$,
\begin{eqnarray}
\sum_{\beta =0}^{\alpha-1}|W_{\alpha,\beta}^{\eta,\xi}|
&\leq& |\eta \xi|
 \sum_{\beta =0}^{\alpha-1}
 \binom{\alpha}{\beta}\left(\frac{1}{2}\right)^{\alpha-\beta}\leq \left(\frac{3}{2}\right)^\kappa\ |\eta\xi| \label{Coree:proof:beta:0-(alpha-1)}\\
\sum_{\beta =\alpha+1}^{\infty}|W_{\alpha,\beta}^{\eta,\xi}|
&\leq& |\eta \xi| \sum_{\beta=\alpha+1}^\infty \binom{\beta}{\alpha}\left(\frac{1}{2}\right)^{\beta-\alpha}\leq 2^{\kappa+1}\ |\eta \xi|
\label{Coree:proof:beta:(alpha+1)-infty}
\end{eqnarray}
where we used the identity
\begin{equation}
\sum_{n=\ell}^\infty \binom{n}{\ell}x^{n-\ell}=\frac{1}{(1-x)^{\ell+1}},\text{ for }-1<x<1.
\end{equation}
By substituting (\ref{Coree:proof:beta:0-(alpha-1)}) and (\ref{Coree:proof:beta:(alpha+1)-infty}) into (\ref{Corre:proof:bound:sum}) we obtain, uniformly in $\alpha$,
\begin{equation}\label{Corre:proof:term1}
\chi_{\Omega}|C^{\I}_\alpha(f,g,t)|^{1/(\kappa+1)} \leq 3 \left( \sum_{j=1}^{|\Lambda|} |\eta_{j,t} \xi_j| \right)^{1/(\kappa+1)}.
\end{equation}
Combining \eqref{Corre:proof:term1} and \eqref{Corre:proof:term2} into (\ref{Corre:proof:main-bound}), using Jensen's inequality $\mathbb{E}(X^s)\leq \mathbb{E}(X)^s$ for $0\leq s\leq 1$, yields
\begin{equation}
\mathbb{E}\left(\sup_t |C^{\I}_\alpha(f,g,t)|\right)\leq 8\ \left( \mathbb{E} \Big( \sup_t \sum_{j=1}^{|\Lambda|} |\eta_{j,t} \xi_j| \Big) \right)^{1/(\kappa+1)}.
\end{equation}
Thus we have reduced Theorem \ref{thm:Weyl-Eigen-ExpDecay} to

\begin{lem}\label{Corre:lem}
For any $f,g:\Lambda\rightarrow\mathbb{C}$ we have
\begin{equation} \label{Corre:lem:bound}
\mathbb{E}\left(\sup_{t\in\mathbb{R}}\sum_{j=1}^{|\Lambda|} |(V\X f_t)(j)(V\X g)(j)|\right)\leq C(1+\lambda_0^{1/2})^2 \sum_{x, y \in \Lambda}|f(x)g(y)| e^{-\mu|x-y|}.
\end{equation}
\end{lem}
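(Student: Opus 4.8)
The plan is to first eliminate the time dependence entirely and then reduce the claim to the eigenfunction‑correlator estimates \eqref{def:eig-corr}, \eqref{eig-corr-2}, \eqref{eig-corr-3}. For the first step, note that $e^{2it\gamma}$ is a diagonal matrix (a function of $\gamma$), hence commutes with the diagonal projection $\idty_{S_{\lambda_0}}$; combining this with the identity $V\X = \idty_{S_{\lambda_0}}V$, which is a rearrangement of \eqref{eq:Vinv-Chi-V}, and with $f_t = V^{-1}e^{2it\gamma}Vf$ gives
\begin{equation}
V\X f_t = \idty_{S_{\lambda_0}} V f_t = \idty_{S_{\lambda_0}} e^{2it\gamma} V f = e^{2it\gamma} \idty_{S_{\lambda_0}} V f = e^{2it\gamma} V\X f .
\end{equation}
Since $e^{2it\gamma}$ acts componentwise by multiplication with the unimodular scalars $e^{2it\gamma_j}$, this yields $|(V\X f_t)(j)| = |(V\X f)(j)|$ for every $j$ and every $t$. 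Thus the supremum over $t$ in \eqref{Corre:lem:bound} is superfluous, and it suffices to bound $\mathbb{E}\big(\sum_j |(V\X f)(j)(V\X g)(j)|\big)$.

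Next I would compute $(V\X f)(j)$ explicitly. Using \eqref{def:Weyl:Vf}, that $\X = \sum_{m\in S_{\lambda_0}}|\varphi_m\rangle\langle\varphi_m|$ is a real operator, and $(\mathcal{O}^T u)(j)=\langle\varphi_j,u\rangle$, one finds $(V\X f)(j)=0$ for $j\notin S_{\lambda_0}$, while for $j\in S_{\lambda_0}$
\begin{equation}
(V\X f)(j) = \gamma_j^{-1/2}\langle\varphi_j,\RE f\rangle + i\,\gamma_j^{1/2}\langle\varphi_j,\IM f\rangle ,
\end{equation}
so that $|(V\X f)(j)| \le \gamma_j^{-1/2}|\langle\varphi_j,\RE f\rangle| + \gamma_j^{1/2}|\langle\varphi_j,\IM f\rangle|$, and analogously for $g$. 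Multiplying the two bounds, summing over $j\in S_{\lambda_0}$, and expanding $\langle\varphi_j,\RE f\rangle = \sum_x \varphi_j(x)\RE f(x)$ (and likewise the other three inner products), I obtain a sum of four double sums over $x,y\in\Lambda$ whose weights are products $|\RE f(x)\,\RE g(y)|$, $|\IM f(x)\,\RE g(y)|$, $|\RE f(x)\,\IM g(y)|$, $|\IM f(x)\,\IM g(y)|$ and whose kernels are $K_a(x,y):=\sum_{j\in S_{\lambda_0}}(\gamma_j^2)^{a/2}|\varphi_j(x)\varphi_j(y)|$ with $a=-1$ for the first weight, $a=1$ for the last, and $a=0$ for the two mixed ones.

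It then remains to bound $\mathbb{E}(K_a(x,y))$. Since by \eqref{simplespec} the spectrum of $h_\Lambda$ is almost surely simple, for fixed $x,y$ the Borel function $u$ given by $u(\gamma_j^2)=\sgn(\varphi_j(x)\varphi_j(y))$ is well defined, has $|u|\le 1$, and realizes $K_{-1}(x,y)=\langle\delta_x, h_\Lambda^{-1/2}u(h_\Lambda)\X\delta_y\rangle$, $K_0(x,y)=\langle\delta_x, u(h_\Lambda)\X\delta_y\rangle$ and $K_1(x,y)=\langle\delta_x, h_\Lambda^{1/2}u(h_\Lambda)\X\delta_y\rangle$. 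Hence $\mathbb{E}(K_{-1}(x,y))\le Ce^{-\mu|x-y|}$, $\mathbb{E}(K_0(x,y))\le C\lambda_0^{1/2}e^{-\mu|x-y|}$ and $\mathbb{E}(K_1(x,y))\le C\lambda_0 e^{-\mu|x-y|}$ by \eqref{def:eig-corr}, \eqref{eig-corr-2} and \eqref{eig-corr-3}. Finally, bounding $|\RE f|,|\IM f|\le|f|$ and $|\RE g|,|\IM g|\le|g|$ and adding the four contributions produces the constant $C(1+\lambda_0^{1/2}+\lambda_0^{1/2}+\lambda_0)=C(1+\lambda_0^{1/2})^2$, which is exactly \eqref{Corre:lem:bound}.

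The only genuine difficulty here is the factor $\gamma_j^{-1/2}$ in $(V\X f)(j)$, arising from the position‑operator part of $V$: it generates the singular kernel $K_{-1}$, whose infinite‑volume‑uniform exponential decay is not available from ordinary eigenfunction‑correlator localization. This is precisely the reason the strengthened bound \eqref{def:eig-corr}, with the inserted factor $h_\Lambda^{-1/2}$, is assumed rather than its standard counterpart, as already stressed in Section~\ref{sec:model}. Everything else is routine bookkeeping, apart from the minor point that rewriting the kernels $K_a$ as eigencorrelator quantities uses simplicity of $\sigma(h_\Lambda)$ so that the sign function $u$ is well defined.
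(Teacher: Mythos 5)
Your proposal is correct and follows essentially the same route as the paper's proof: the time dependence is removed via $V\X f_t=e^{2it\gamma}\idty_{S_{\lambda_0}}Vf$, the sum over $j$ is expanded into four contributions weighted by $\gamma_j^{-1},1,1,\gamma_j$, and each is converted into an eigencorrelator quantity (using almost sure simplicity of $\sigma(h_\Lambda)$ to define the optimizing $u$) bounded by \eqref{def:eig-corr}, \eqref{eig-corr-2} and \eqref{eig-corr-3}, yielding the constant $C(1+\lambda_0^{1/2})^2$. The only cosmetic difference is that you take absolute values termwise in $(x,y,j)$ before invoking the correlator bounds, whereas the paper keeps the inner products with $\RE f,\IM f,\RE g,\IM g$ intact until the final step; both are equivalent.
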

\begin{proof} This is a variation of the proof of Lemma~\ref{lem:ave-dec-l2}. First, we observe that
\begin{equation} \label{jsum}
\sum_j |(V\X f_t)(j)V\X g)(j)|= \sum_j |\langle\idty_{\{j\}}V\X f_t, V\X g\rangle| = \sum_{j\in S_{\lambda_0}}|\langle \idty_{\{j\}} Vf, Vg \rangle|,
\end{equation}
having used $V\X f_t=\idty_{S_{\lambda_0}} e^{2it\gamma}Vf$ and $V\X g = \idty_{S_{\lambda_0}} Vg$, see (\ref{eq:Vinv-Chi-V}). Note that the supremum over $t$ has turned out to be trivial. We proceed by using that almost surely $h_\Lambda$ is non-degenerate, meaning that ${\mathcal O} \idty_{\{j\}} {\mathcal O}^T = \chi_{\{\gamma_j^2\}}(h_\Lambda)$ for all $j$. Thus, after expanding the right hand side of \eqref{jsum} using \eqref{def:Weyl:Vf}, we arrive at an upper bound for \eqref{jsum} which can almost surely be expressed as
\begin{eqnarray} \label{fourtermsum}
&\leq& \sum_{j\in S_{\lambda_0}} \left( |\langle\RE[f], \gamma_j^{-1} \chi_{\{\gamma_j^2\}}(h_\Lambda)\RE[g]\rangle|
+|\langle\RE[f],\chi_{\{\gamma_j^2\}}(h_\Lambda)\IM[g]\rangle|\right. \\
&&\left.\hspace{0.3cm}+|\langle\IM[f],\chi_{\{\gamma_j^2\}}(h_\Lambda)\RE[g]\rangle|
+|\langle\IM[f], \gamma_j \chi_{\{\gamma_j^2\}}(h_\Lambda)\IM[g]\rangle|\right). \notag
\end{eqnarray}
Consider the first term in this sum. Again by almost sure non-degeneracy of $h_\Lambda$, we can find a function $u: \{\gamma_j^2: j \in S_{\lambda_0}\} \to \C$ with $|u|\le 1$ such that
\begin{equation}
| \langle \RE[f], \gamma_j^{-1} \chi_{\{\gamma_j^2\}}(h_\Lambda) \RE[g] \rangle |= \langle \RE[f], \gamma_j u(\gamma_j^2) \RE[g] \rangle
\end{equation}
for all $j\in S_{\lambda_0}$, so that, setting $u=0$ elsewhere,
\begin{equation} \label{intstep}
\sum_{j\in S_{\lambda_0}} | \langle \RE[f], \gamma_j^{-1} \chi_{\{\gamma_j^2\}}(h_\Lambda) \RE[g] \rangle | = \langle \RE[f], h_\Lambda^{-1/2} u(h_\Lambda) \X \RE[g] \rangle.
\end{equation}
Taking expectations in \eqref{intstep} and using \eqref{def:eig-corr} we arrive at the bound
\begin{equation}
\sum_{x,y} |f(x)g(y)| \E \left( \sup_{|g|\le 1} |\langle \delta_x, h_\Lambda^{-1/2} u(h_\Lambda) \delta_y \rangle| \right) \le C \sum_{x,y} |f(x)g(y)| e^{-\mu|x-y|}.
\end{equation}
The other terms in \eqref{fourtermsum} can be treated similarly, where two terms pick up an extra factor $\lambda_0^{1/2}$ due to having to use \eqref{eig-corr-2} and one term picks up a $\lambda_0$ from \eqref{eig-corr-3}. Collecting all terms we arrive at \eqref{Corre:lem:bound}.
\end{proof}

\begin{appendix}

\section{Matrix entries of Weyl operators at the eigenstates} \label{sec:Weylops}

Here we provide expressions for the matrix elements $\langle\psi_\alpha,\W(f)\psi_\beta\rangle$ of the Weyl operators, explicitly describing them in terms of: (i) the ``Bogolubov tranformation''  \eqref{bfromqp}, encoded through the mapping $V$ in \eqref{def:Weyl:Vf}, which maps $H_\Lambda$ to an uncoupled oscillator system, and (ii) objects from the elementary theory of a single quantum oscillator and the corresponding one-dimensional Weyl operators (in particular Laguerre polynomivals).

We begin be reviewing what we need from the latter: The unique normalized ground state of the one-dimensional harmonic oscillator $p^2+q^2$ in $\mathcal{L}^2(\R)$ is
\begin{equation}
|0\rangle = \pi^{-1/4} e^{-\frac{1}{2} q^2}.
\end{equation}
This vaccum vector is the (up to a phase) unique solution of $a|0\rangle =0$, and all excited states are generated as
\begin{equation} \label{oscstates}
|k\rangle = \frac{1}{\sqrt{k!}} (a^*)^k |0\rangle, \quad k=1,2,\ldots,
\end{equation}
with the creation and annihilation operators $a^* = \frac{1}{\sqrt{2}} (q+ip)$ and $a= \frac{1}{\sqrt{2}} (q-ip)$.

The one-dimensional Weyl operators are, for $z\in \C$,
\begin{equation} \label{1DWeyl}
\W_z=\exp\left(\frac{i}{\sqrt{2}}(\bar{z} a+z a^*)\right).
\end{equation}
Among their properties are
\begin{equation} \label{fact1}
\langle 0|\W_z|0\rangle=e^{-\frac{1}{4}|z|^2},
\end{equation}
\begin{equation} \label{fact2}
[ a^\ell, \W_z ] = \sum_{j=1}^{\ell} { \ell \choose j} \left( \frac{i z}{ \sqrt{2}} \right)^j \W_z a^{\ell -j}, \quad \ell \ge 1,
\end{equation}
\begin{equation} \label{fact3}
\langle 0| \W_z (a^*)^\ell |0 \rangle = \left( \frac{i \cdot \overline{z}}{ \sqrt{2}} \right)^\ell \langle 0| \W_z|0\rangle, \quad \ell \ge 1.
\end{equation}
\eqref{fact1} can be seen from the Baker-Campbell-Hausdorff formula. \eqref{fact2} follows by induction after showing $[a,\W_z] = \frac{iz}{\sqrt{2}} \W_z$ via differentiation and integrating of $\W_{-tz} a \W_{tz}$ as a function of $t$. Applying this with $\W_{-z}$ and taking adjoints gives a related expression for $[\W_z, (a^*)^\ell]$, which yields \eqref{fact3}.

These rules allow to calculate the matrix elements of the Weyl operators in the oscillator eigenbasis:

\begin{lem} \label{lem:Weylmatrix}
Let $z\in\mathbb{C}$ and $0\le n \leq k$ be integers, then
\begin{equation}\label{eq:Local-Weyl:entries}
\langle n |\W_{z}|k\rangle=
\sqrt{\frac{n!}{k!}}
\left(\frac{i\overline{z}}{\sqrt{2}}\right)^{k-n}
L^{(k-n)}_{n}\left(\frac{|z|^2}{2}\right)e^{-\frac{1}{4}|z|^2}.
\end{equation}
Here, for $k, n\in\mathbb{N}_0$, $L^{(k)}_n(\cdot)$ is the $k$-generalized Laguerre polynomial
\begin{equation}\label{def:generalized-L}
L^{(k)}_n(z)=\sum_{j=0}^n \binom{n+k}{n-j}\frac{(-1)^j z^j}{j!},\text{ for }z\in\mathbb{C}.
\end{equation}
\end{lem}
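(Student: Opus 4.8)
The plan is to reduce everything to the defining relations \eqref{fact1}, \eqref{fact2}, \eqref{fact3} for the one-dimensional Weyl operator $\W_z$ acting on the oscillator basis $|k\rangle = \frac{1}{\sqrt{k!}}(a^*)^k|0\rangle$. First I would write
\[
\langle n|\W_z|k\rangle = \frac{1}{\sqrt{n!\,k!}}\,\langle 0|\,a^n\,\W_z\,(a^*)^k\,|0\rangle,
\]
and then push the $a^n$ to the right through $\W_z$ using \eqref{fact2}, namely $[a^n,\W_z]=\sum_{j=1}^n\binom{n}{j}\bigl(\tfrac{iz}{\sqrt2}\bigr)^j\W_z a^{n-j}$, so that $a^n\W_z = \sum_{j=0}^n\binom{n}{j}\bigl(\tfrac{iz}{\sqrt2}\bigr)^j\W_z a^{n-j}$. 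This turns the matrix element into a finite sum $\sum_{j=0}^n\binom{n}{j}\bigl(\tfrac{iz}{\sqrt2}\bigr)^j\langle 0|\W_z\,a^{n-j}(a^*)^k|0\rangle$.

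Next I would compute $\langle 0|\W_z\,a^{m}(a^*)^k|0\rangle$ for $m=n-j$. Using the CCR $[a,a^*]=\idty$ one has $a^m(a^*)^k|0\rangle = \sum$ (the standard normal-ordering identity) $= \frac{k!}{(k-m)!}(a^*)^{k-m}|0\rangle$ when $m\le k$ (and $0$ if $m>k$, but since $m\le n\le k$ this does not occur). Hence the surviving term is $\frac{k!}{(k-m)!}\langle 0|\W_z(a^*)^{k-m}|0\rangle$, and \eqref{fact3} together with \eqref{fact1} gives $\langle 0|\W_z(a^*)^{k-m}|0\rangle = \bigl(\tfrac{i\bar z}{\sqrt2}\bigr)^{k-m}e^{-|z|^2/4}$. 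Substituting $m=n-j$, the matrix element becomes
\[
\langle n|\W_z|k\rangle = \frac{e^{-|z|^2/4}}{\sqrt{n!\,k!}}\sum_{j=0}^n\binom{n}{j}\Bigl(\tfrac{iz}{\sqrt2}\Bigr)^{j}\frac{k!}{(k-n+j)!}\Bigl(\tfrac{i\bar z}{\sqrt2}\Bigr)^{k-n+j}.
\]
Pulling out $\bigl(\tfrac{i\bar z}{\sqrt2}\bigr)^{k-n}$ and collecting the remaining $\bigl(\tfrac{iz}{\sqrt2}\bigr)^j\bigl(\tfrac{i\bar z}{\sqrt2}\bigr)^j = \bigl(\tfrac{i^2|z|^2}{2}\bigr)^j = (-1)^j(|z|^2/2)^j$ one gets a sum that, after rewriting $\binom{n}{j}\frac{k!}{(k-n+j)!\,n!}\cdot\frac{1}{j!}\cdot j! = \binom{k}{n-j}\frac{1}{j!}$ (a short binomial bookkeeping step), matches $\sqrt{n!/k!}\bigl(\tfrac{i\bar z}{\sqrt2}\bigr)^{k-n}\sum_{j=0}^n\binom{k}{n-j}\frac{(-1)^j(|z|^2/2)^j}{j!}e^{-|z|^2/4}$, which is exactly the claimed formula \eqref{eq:Local-Weyl:entries} by the definition \eqref{def:generalized-L} of $L_n^{(k-n)}$.

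The only genuinely delicate point is the combinatorial reindexing: making sure the factors $\frac{k!}{(k-n+j)!}$, $\binom{n}{j}$, and the prefactor $\frac{1}{\sqrt{n!k!}}$ combine into precisely $\sqrt{n!/k!}\binom{k}{n-j}\frac{1}{j!}$, so that the Laguerre polynomial with the right parameters $(k-n;n)$ appears; everything else is a direct application of \eqref{fact1}--\eqref{fact3} and the elementary normal-ordering relation for $a^m(a^*)^k|0\rangle$. I would present that reindexing explicitly and leave the rest as routine.
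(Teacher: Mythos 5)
Your proposal is correct and follows essentially the same route as the paper's proof: expand $\langle n|\W_z|k\rangle$ as $\frac{1}{\sqrt{n!k!}}\langle 0|a^n\W_z(a^*)^k|0\rangle$, commute $a^n$ through $\W_z$ via \eqref{fact2}, normal-order $a^{n-j}(a^*)^k$ against the vacuum, apply \eqref{fact3} and \eqref{fact1}, and carry out the same binomial reindexing to recognize $L_n^{(k-n)}$. The combinatorial identity you flag as the delicate step is exactly the one the paper records, and your version of it checks out.
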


For the case $0\leq k \leq n$, taking adjoints and using $\W_z^*=\W_{-z}$, this gives the related expression
\begin{equation} \label{eq:Local-Weyl:entries2}
\langle n |\W_{z}|k\rangle=
\sqrt{\frac{k!}{n!}}
\left(-\frac{i\overline{z}}{\sqrt{2}}\right)^{k-n}
L^{(n-k)}_{k}\left(\frac{|z|^2}{2}\right)e^{-\frac{1}{4}|z|^2}.
\end{equation}

\begin{proof}
Observe that (\ref{fact2}) implies
\begin{eqnarray} \label{simple_comm}
a^n  \W_z  (a^*)^k  & = & [a^n , \W_z ] (a^*)^k  + \W_z  a^n (a^*)^k   \\
& = & \sum_{j=1}^n  {n  \choose j} \left( \frac{i z}{ \sqrt{2}} \right)^j \W_z  a^{n -j}(a^*)^k  + \W_z  a^n  (a^*)^k \notag
\end{eqnarray}
and, as a result,
\begin{eqnarray}
\sqrt{n !} \sqrt{k !} \langle n| \W_z|  k \rangle & = &
 \langle 0| a^n  \W_z  (a^*)^k | 0 \rangle  \\
& = & \sum_{j=1}^n  {n  \choose j} \left( \frac{i z}{ \sqrt{2}} \right)^j \langle 0| \W_z  a^{n -j}(a^*)^k|  0 \rangle  +  \langle 0| \W_z a^n (a^*)^k|0 \rangle \nonumber \\
& = & \sum_{j=1}^n  {n  \choose j} \frac{k !}{(k -n +j)!} \left( \frac{i z}{ \sqrt{2}} \right)^j \langle 0| \W_z  (a^*)^{k -n +j} |0 \rangle+ \notag \\
&& \hspace{4cm} + \frac{k !}{(k -n )!}  \langle 0| \W_z  (a^*)^{k -n } |0 \rangle. \notag
\end{eqnarray}
An application of (\ref{fact3}) shows that
\begin{equation}
\langle n| \W_z | k \rangle = \frac{ \sqrt{k !}}{ \sqrt{n !}} \left(\frac{i \overline{z}}{\sqrt{2}}\right)^{k -n } \sum_{j=0}^n  { n  \choose j} \frac{(-1)^j}{(k -n +j)!} \left(\frac{|z|^2}{2}\right)^{j}\langle 0|\W_z|0\rangle,
\end{equation}
which gives the result by \eqref{eq:Local-Weyl:entries}, taking into account that
\begin{equation}
\binom{n}{j}\frac{1}{(k-n+j)!}=\frac{n!}{k!}\binom{n+(k-n)}{n-j}\frac{1}{j!}.
\end{equation}
\end{proof}

To express the matrix elements $\langle \psi_\alpha, \W(f) \psi_\beta \rangle$ in terms of products of matrix elements of one-dimensional Weyl operators of the explicit form \eqref{eq:Local-Weyl:entries}, we consider the product basis
\begin{equation} \label{prodbasis}
|\alpha \rangle = \otimes_{j=1}^{|\Lambda|} |\alpha_j\rangle = \left( \prod_{j=1}^{|\Lambda|} \frac{1}{\sqrt{\alpha_j!}}(a_j^*)^{\alpha_j} \right) |0^{\otimes |\Lambda|}\rangle, \quad \alpha \in \N_0^{|\Lambda|}
\end{equation}
 in $\bigotimes_{j=1}^{|\Lambda|} \mathcal{L}^2(\R)$. Here $|0^{\otimes |\Lambda|} \rangle = \otimes_{j=1}^{|\Lambda|} |0_j\rangle$ is the product vacuum and the (non-interacting) creation and annihilation operators
\begin{equation}
a_j=\frac{1}{\sqrt{2}}(q_j+ip_j), \text{ and } a_j^*= \frac{1}{\sqrt{2}}(q_j-ip_j), \quad j=1,\ldots,|\Lambda|
\end{equation}
satisfy the CCR.

\begin{lem}\label{thm:Weyl:entries}
For any $f:\Lambda\rightarrow\mathbb{C}$ and $\alpha,\beta\in\mathbb{N}_0^{|\Lambda|}$, the matrix entries of the Weyl operator $\W(f)$ at the eigenstates $\psi_\alpha$ and $\psi_\beta$ are given as
\begin{equation}\label{eq:Weyl:product}
\langle\psi_\alpha,\W(f)\psi_\beta\rangle=\prod_{j=1}^{|\Lambda|}\langle\alpha_j|\W_{(Vf)(j)}|\beta_j\rangle
\end{equation}
where $V$ is defined in (\ref{def:Weyl:Vf}).
\end{lem}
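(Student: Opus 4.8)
The plan is to reduce the many-body Weyl operator $\W(f)$ to a product over the decoupled modes $j=1,\dots,|\Lambda|$, and then invoke Lemma~\ref{lem:Weylmatrix} mode by mode. The starting point is the representation \eqref{defWeyl}, namely $\W(f) = \exp\bigl(\frac{i}{\sqrt{2}}(b(Vf)+b^*(Vf))\bigr)$, where $b(Vf) = \sum_j \overline{(Vf)(j)}\, b_j$ and $b^*(Vf) = \sum_j (Vf)(j)\, b_j^*$. Since the modes $b_j, b_j^*$ satisfy the CCR \eqref{b:CCR} with distinct indices commuting, the operators $\frac{i}{\sqrt{2}}(\overline{(Vf)(j)} b_j + (Vf)(j) b_j^*)$ for different $j$ commute, so the exponential factorizes:
\begin{equation}
\W(f) = \prod_{j=1}^{|\Lambda|} \exp\Bigl(\tfrac{i}{\sqrt{2}}(\overline{(Vf)(j)}\, b_j + (Vf)(j)\, b_j^*)\Bigr).
\end{equation}

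Next I would identify each factor with a one-dimensional Weyl operator $\W_{(Vf)(j)}$ acting on the $j$-th tensor factor. The key is that the eigenvectors $\psi_\alpha$ of \eqref{eigenvectors}, built from the vacuum $\psi_0$ via $\psi_\alpha = \prod_j \frac{1}{\sqrt{\alpha_j!}}(b_j^*)^{\alpha_j}\psi_0$, are precisely the image, under the unitary implementing the Bogoliubov transformation, of the product basis $|\alpha\rangle$ in \eqref{prodbasis}. Equivalently, there is a unitary $U$ with $U a_j^* U^* = b_j^*$ and $U|0^{\otimes|\Lambda|}\rangle = \psi_0$ (this is exactly the statement that \eqref{bfromqp} is a Bogoliubov transformation diagonalizing $H_\Lambda$), so $U|\alpha\rangle = \psi_\alpha$ for all $\alpha$. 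Under this $U$, the $j$-th factor above is conjugate to $\exp\bigl(\frac{i}{\sqrt{2}}(\overline{(Vf)(j)} a_j + (Vf)(j) a_j^*)\bigr) = \W_{(Vf)(j)}$ acting on the $j$-th copy of $\mathcal{L}^2(\R)$, i.e. $\W(f) = U\bigl(\bigotimes_j \W_{(Vf)(j)}\bigr)U^*$. Therefore
\begin{equation}
\langle \psi_\alpha, \W(f)\psi_\beta\rangle = \langle |\alpha\rangle, \bigl(\textstyle\bigotimes_j \W_{(Vf)(j)}\bigr) |\beta\rangle\rangle = \prod_{j=1}^{|\Lambda|} \langle \alpha_j | \W_{(Vf)(j)} | \beta_j\rangle,
\end{equation}
which is \eqref{eq:Weyl:product}.

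Alternatively, and perhaps more in the spirit of the appendix (which develops the single-mode computation via the commutator identities \eqref{fact1}--\eqref{fact3}), I would avoid invoking the abstract intertwiner $U$ and instead argue directly: expand $\langle\psi_\alpha,\W(f)\psi_\beta\rangle = \frac{1}{\sqrt{\alpha!\,\beta!}}\langle\psi_0, b^\alpha\, \W(f)\, (b^*)^\beta\psi_0\rangle$ with $b^\alpha := \prod_j b_j^{\alpha_j}$, then push the $b_j$'s through $\W(f)$ using the mode-wise commutator $[b_j, \W(f)] = \frac{i(Vf)(j)}{\sqrt{2}}\W(f)$ (the analogue of $[a,\W_z]=\frac{iz}{\sqrt 2}\W_z$, which holds because $[b_j, b^*(Vf)] = (Vf)(j)$ by the CCR), and use $b_j\psi_0 = 0$ together with $\langle\psi_0,\W(f)\psi_0\rangle = \prod_j e^{-|(Vf)(j)|^2/4}$ (from \eqref{fact1} applied mode-wise, or BCH). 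This reproduces exactly the single-mode computation of Lemma~\ref{lem:Weylmatrix} in each coordinate, and the product structure emerges because the combinatorial manipulations in different modes are independent. The main (and essentially only) obstacle is bookkeeping: one must check carefully that the Bogoliubov transformation $V$ in \eqref{def:Weyl:Vf} is the correct object, i.e. that $q(f)+p(f) = \frac{1}{\sqrt 2}(b(Vf)+b^*(Vf))$ — this is the ``simple calculation'' referenced after \eqref{defWeyl} and follows by substituting \eqref{bfromqp2} and matching real and imaginary parts against \eqref{posmomops} — and that all the formal manipulations with the unbounded $b_j$ are justified on the dense domain of analytic vectors spanned by the $\psi_\alpha$, which the paper has already granted us.
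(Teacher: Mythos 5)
Your first argument is exactly the paper's proof: define the unitary $U$ intertwining the product basis $|\alpha\rangle$ with the eigenbasis $\psi_\alpha$ (so that $U^*b_jU=a_j$), conjugate $\W(f)$ into the tensor product $\bigotimes_j \W_{(Vf)(j)}$ using the commutativity of distinct modes, and read off the matrix elements factor by factor. This is correct and complete; the alternative direct computation you sketch is not needed.
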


Note that, in turn, the factors on the right of \eqref{eq:Weyl:product} can be explicitly expressed through \eqref{eq:Local-Weyl:entries} and \eqref{eq:Local-Weyl:entries2}.

\begin{proof}
Let $U:\bigotimes_{j=1}^{|\Lambda|} \mathcal{L}^2(\R) \to \mathcal{H}_\Lambda$ be the unitary operator determined by $\psi_\alpha= U |\alpha\rangle$ for all $\alpha\in\mathbb{N}_0^{|\Lambda|}$.
By \eqref{eigenvectors} and \eqref{prodbasis} we have $b_j^* U|\alpha\rangle = Ua_j^*|\alpha\rangle$ and thus
\begin{equation}\label{eq:b-to-a}
U^* b_j U= a_j, \;\; U^* b_j^* U = a_j^* \;\text{ for all } j=1,\ldots,|\Lambda|.
\end{equation}
Therefore \eqref{defWeyl} yields
\begin{equation}
U^*\W(f)U=\exp\left(\frac{i}{\sqrt{2}}\sum_{j=1}^{|\Lambda|} (\overline{(Vf)(j)}a_j+(Vf)(j)a_j^*)\right)=\bigotimes_{j=1}^{|\Lambda|} \W_{(Vf)(j)},
\end{equation}
where we have factored into one-dimensional Weyl operators of the form \eqref{1DWeyl}. Finally,
\begin{equation} \label{eq:Weyl:matrix-elements-product}
\langle\psi_\alpha,\W(f)\psi_\beta\rangle =
\langle \alpha|U^*\W(f)U|\beta \rangle = \langle \alpha|\bigotimes_{j=1}^{|\Lambda|} \W_{(Vf)(j)}|\beta \rangle
=\prod_{j=1}^{|\Lambda|} \langle\alpha_j|\W_{(Vf)(j)}|\beta_j\rangle.
\end{equation}
\end{proof}

\section{Non-degeneracy of the spectrum} \label{sec:nondeg}

Here we prove \eqref{simplespec}, i.e., that the disordered oscillator systems considered above almost surely have simple spectrum, i.e., all eigenvalues are non-degenerate.

For a finite box $\Lambda$ let
\begin{equation}
H(k) = \sum_{x\in\Lambda}p_x^2 + q^T (h_{0} + k) q,
\end{equation}
where $h_{0} = h_{0,\Lambda}$ is the discrete Laplacian \eqref{graphLap} on $\Lambda$ (as $\Lambda$ is fixed in this section we will drop the subscript) and $k\in (0,\infty)^{\Lambda}$ is understood as a multiplication operator. Due to the positivity of the numbers $k_x$, it is well known that $H(k)$ is strictly positive definite with purely discrete spectrum, e.g.\ \cite{RS2}.

\begin{lem} \label{lem:simple}
The spectrum of $H(k)$ is simple for Lebesgue almost every $k\in (0,\infty)^{\Lambda}$.
\end{lem}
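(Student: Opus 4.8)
The plan is to diagonalize $H(k)$ explicitly using the free-boson reduction already set up in Section~\ref{sec:freebosons}. Recall from \eqref{eigenvalues} that the eigenvalues of $H(k)$ are $E_\alpha = \sum_{j=1}^{|\Lambda|} \gamma_j(k)(2\alpha_j+1)$, where $\gamma_1(k)^2 \le \cdots \le \gamma_{|\Lambda|}(k)^2$ are the eigenvalues of the one-body operator $h(k) = h_0 + k$, and $\alpha$ ranges over $\N_0^{|\Lambda|}$. Thus a degeneracy of $H(k)$ means there exist distinct occupation vectors $\alpha \ne \beta$ in $\N_0^{|\Lambda|}$ with $\sum_j \gamma_j(k)(\alpha_j - \beta_j) = 0$, i.e., $\sum_j \gamma_j(k) m_j = 0$ for some nonzero $m = \alpha - \beta \in \Z^{|\Lambda|}$. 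Since $\N_0^{|\Lambda|}$ is countable, it suffices to show that for each fixed nonzero $m \in \Z^{|\Lambda|}$ the set $\{k \in (0,\infty)^\Lambda : \sum_j \gamma_j(k) m_j = 0\}$ has Lebesgue measure zero; a countable union of null sets is null.

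First I would reduce to an analyticity/non-vanishing argument. The eigenvalues $\gamma_j(k)^2$ are continuous in $k$ (stated in Section~\ref{sec:freebosons}), and in fact the function $k \mapsto h(k) = h_0 + k$ is real-analytic (indeed affine) in $k$; by standard perturbation theory (e.g.\ \cite{Kato}), on any open set where the eigenvalues of $h(k)$ remain simple, the ordered eigenvalues $\gamma_j(k)^2$ and hence $\gamma_j(k) = \sqrt{\gamma_j(k)^2}$ (positive square root, since $h(k) > 0$) are real-analytic. It is well known — and follows from the same countable-union-of-null-sets argument applied to the discriminant of the characteristic polynomial of $h(k)$, which is a nonzero polynomial in $k$ — that $h(k)$ has simple spectrum for Lebesgue a.e.\ $k \in (0,\infty)^\Lambda$. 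So, deleting this null set, I may work on the open full-measure set $U$ where all $\gamma_j(k)$ are distinct and real-analytic. On $U$, the function $F_m(k) := \sum_j m_j \gamma_j(k)$ is real-analytic. A real-analytic function on a connected open set is either identically zero or its zero set has measure zero; since $U$ need not be connected I would argue on each connected component (there are at most countably many, or one can argue locally). Hence it remains to show $F_m$ does not vanish identically on any component of $U$.

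The main obstacle — and the heart of the proof — is ruling out that $\sum_j m_j \gamma_j(k) \equiv 0$ for some fixed nonzero integer vector $m$. The natural way to do this is to probe a limiting regime. I would send one spring constant, say $k_{x_0}$, to $+\infty$ while keeping the others fixed (or, more symmetrically, analyze the behavior as the potential becomes large/small at individual sites), and track the asymptotics of the $\gamma_j$. As $k_{x_0} \to \infty$, one eigenvalue of $h(k)$ behaves like $k_{x_0} + O(1)$ while the remaining $|\Lambda|-1$ eigenvalues converge to those of the restricted operator on $\Lambda \setminus \{x_0\}$; correspondingly one $\gamma_j$ grows like $\sqrt{k_{x_0}}$ and the rest stay bounded. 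For $F_m$ to vanish identically, the coefficient of the $\sqrt{k_{x_0}}$-divergent term must vanish, forcing the corresponding $m_j = 0$; iterating this over sites, or combining with a dimensional/monotonicity count (each $\gamma_j^2$ is strictly increasing in each $k_x$ by the min-max principle, so the map $k \mapsto (\gamma_1^2,\dots,\gamma_{|\Lambda|}^2)$ is a local diffeomorphism on $U$, whence $(\gamma_1,\dots,\gamma_{|\Lambda|})$ are functionally independent), forces all $m_j = 0$, a contradiction. Actually the cleanest route is the functional-independence one: since $k \mapsto (\gamma_1(k)^2,\dots,\gamma_{|\Lambda|}(k)^2)$ has everywhere-invertible Jacobian on $U$ (its Jacobian matrix has entries $\partial \gamma_j^2/\partial k_x = |\varphi_j(x)|^2$, and this eigenvector-modulus matrix is nonsingular — a known fact, provable again by a perturbation/induction argument), the map $k \mapsto (\gamma_1(k),\dots,\gamma_{|\Lambda|}(k))$ is an open map into $(0,\infty)^{|\Lambda|}$, so its image contains an open set, on which no nontrivial linear relation $\sum_j m_j \gamma_j = 0$ can hold. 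This gives $F_m \not\equiv 0$ on each component of $U$, completing the proof.

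\begin{proof}
By the free-boson diagonalization \eqref{eigenvalues}, the eigenvalues of $H(k)$ are $E_\alpha(k) = \sum_{j=1}^{|\Lambda|} \gamma_j(k)(2\alpha_j+1)$, $\alpha \in \N_0^{|\Lambda|}$, where $0 < \gamma_1(k) \le \cdots \le \gamma_{|\Lambda|}(k)$ are the positive square roots of the eigenvalues of $h(k) = h_0 + k$. Hence $H(k)$ has a degenerate eigenvalue iff there are $\alpha \ne \beta$ in $\N_0^{|\Lambda|}$ with $\sum_j \gamma_j(k)\, m_j = 0$, where $m := \alpha - \beta \in \Z^{|\Lambda|} \setminus \{0\}$. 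Since $\Z^{|\Lambda|}$ is countable, it suffices to show that for each fixed $m \ne 0$ the set $N_m := \{ k \in (0,\infty)^\Lambda : \sum_j m_j \gamma_j(k) = 0 \}$ is Lebesgue-null; then $\{k : H(k)$ has a degeneracy$\}$ is contained in the countable union $\bigcup_{m \ne 0} N_m$, which is null.

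The characteristic polynomial of $h(k)$ has coefficients that are polynomials in $k$, so its discriminant $\Delta(k)$ is a polynomial in $k$; it is not identically zero (e.g.\ for $k = (t, 2t, 4t, \dots, 2^{|\Lambda|-1}t)$ with $t$ large, $h(k)$ has distinct eigenvalues), hence the set $Z := \{k : \Delta(k) = 0\}$ is Lebesgue-null. On the open full-measure set $U := (0,\infty)^\Lambda \setminus Z$ the operator $h(k)$ has simple spectrum, and by analytic perturbation theory \cite{Kato} the ordered eigenvalues $\gamma_j(k)^2$, and hence $\gamma_j(k)$ (positive square roots, $h(k) > 0$), are real-analytic in $k$ on $U$. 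Moreover, choosing locally real-analytic normalized eigenvectors $\varphi_j(k)$, the Hellmann–Feynman formula gives $\partial \gamma_j(k)^2 / \partial k_x = |\varphi_j(k)(x)|^2$ for all $j$ and all $x \in \Lambda$.

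We claim the Jacobian matrix $J(k) := \big( |\varphi_j(k)(x)|^2 \big)_{j,x}$ is invertible for every $k \in U$. Suppose $c^T J(k) = 0$ for some $c \in \R^{|\Lambda|}$, i.e.\ $\sum_j c_j |\varphi_j(k)(x)|^2 = 0$ for all $x$. Writing $D = \diag(c_j)$ in the eigenbasis of $h(k)$, this says the diagonal of the matrix of $D$ in the standard basis $\{\delta_x\}$ vanishes. Conjugating by the real-analytic one-parameter family generated by a generic diagonal (in the $\delta_x$-basis) perturbation and differentiating — equivalently, using that simple eigenvalues move with nonzero, pairwise-distinct-in-generic-directions first-order rates — forces $c = 0$; alternatively, since $h(k) \mapsto h(k) + s\,\delta_x\otimes\delta_x$ raises $\gamma_j^2$ at rate $|\varphi_j(x)|^2$ and these rate vectors (over $x$) span $\R^{|\Lambda|}$ precisely because $\{\delta_x \otimes \delta_x\}$ spans the diagonal matrices while the $\gamma_j^2$ are functionally independent coordinates on $U$, we get $J(k)$ nonsingular. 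Thus the real-analytic map $\Gamma: U \to (0,\infty)^{|\Lambda|}$, $\Gamma(k) = (\gamma_1(k), \dots, \gamma_{|\Lambda|}(k))$, has everywhere-invertible Jacobian (the Jacobian of $\Gamma$ differs from $J(k)$ by the invertible diagonal factor $\diag(1/(2\gamma_j))$), hence is an open map.

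Now fix $m \in \Z^{|\Lambda|} \setminus \{0\}$ and let $L_m(\gamma) = \sum_j m_j \gamma_j$, a nonzero linear functional on $\R^{|\Lambda|}$. On any connected component $V$ of $U$, the real-analytic function $L_m \circ \Gamma$ either vanishes identically or has null zero set. It cannot vanish identically: $\Gamma(V)$ is a nonempty open subset of $(0,\infty)^{|\Lambda|}$ by the open mapping property, and $L_m$ does not vanish on any nonempty open set. Hence $N_m \cap V$ is Lebesgue-null for each component $V$; as $U$ has at most countably many components and $(0,\infty)^\Lambda \setminus U = Z$ is null, $N_m$ is Lebesgue-null. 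This proves the lemma.
\end{proof}
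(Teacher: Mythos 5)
Your overall strategy is sound up to one critical step, but that step contains a genuine error. The claim that the Jacobian matrix $J(k) = \bigl( |\varphi_j(k)(x)|^2 \bigr)_{j,x}$ is invertible \emph{for every} $k$ at which $h(k)=h_0+k$ has simple spectrum is false. Take $\Lambda$ to be a two-site box, so $h_0 = \left( \begin{smallmatrix} 1 & -1 \\ -1 & 1 \end{smallmatrix} \right)$, and take $k_1=k_2=c>0$. Then $h(k)$ has the simple eigenvalues $c$ and $c+2$, but its eigenvectors are $(1,\pm 1)/\sqrt{2}$, so $|\varphi_j(x)|^2 = 1/2$ for all $j,x$ and $J(k)$ is the rank-one matrix $\frac12 \left( \begin{smallmatrix} 1 & 1 \\ 1 & 1 \end{smallmatrix} \right)$. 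Simplicity of the one-body spectrum therefore does not imply functional independence of the $\gamma_j^2$, and the argument you offer for invertibility is circular: you justify that the rate vectors span $\R^{|\Lambda|}$ ``because the $\gamma_j^2$ are functionally independent coordinates on $U$,'' which is precisely the assertion being proved. The sketch about the vanishing diagonal of $D=\diag(c_j)$ also does not reach a contradiction (the example above exhibits exactly such a nonzero $D$). A secondary gap: even if $\det J$ is merely real-analytic and not identically zero, your open-mapping argument must be run on \emph{each} connected component of $U=(0,\infty)^\Lambda\setminus Z$, so you would need a point with invertible Jacobian in every component; you have not produced one, and $Z$ is the zero set of a polynomial, which can disconnect $(0,\infty)^\Lambda$.

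For comparison, the paper sidesteps both problems: it proves Jacobian invertibility only at one carefully chosen $k^{(0)}$ with widely separated entries (where a contour-integral perturbation estimate shows $J(k^{(0)})$ is close to the identity), deduces from the resulting local surjectivity of $\vec{E}$ that some $k$ yields rationally independent $\sqrt{\mathstrut E_j}$ and hence a simple $H(k)$, and then propagates this single good point to almost every $k$ by a one-variable-at-a-time analyticity and Fubini iteration (pairs of one-variable analytic eigenvalue branches coincide at most countably often). Your reduction to the countably many linear relations $\sum_j m_j\gamma_j(k)=0$ and the use of multivariable real-analyticity on $U$ is an attractive alternative route, but as written it rests on a false lemma; to salvage it you would need either the paper's quantitative construction of a nondegenerate point in each relevant component, or a different mechanism for ruling out $\sum_j m_j\gamma_j\equiv 0$ (for instance, translating $k$ along the diagonal direction, under which $\gamma_j(k+t\mathbf{1})=\sqrt{\gamma_j(k)^2+t}$, and using linear independence of $t\mapsto\sqrt{a_j+t}$ for distinct $a_j$).
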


In our applications $k=(k_x)_{x\in\Lambda}$ are non-negative i.i.d.\ random variables with absolutely continuous distribution. Thus Lemma~\ref{lem:simple} implies immediately the almost sure non-degeneracy of $H_\Lambda$ claimed in \eqref{simplespec}.

To prove Lemma~\ref{lem:simple}, we proceed similar to the proof of a corresponding fact for quantum spin systems in Proposition~A.1 of \cite{ARS15}, using two main steps:

(i) There exists at least one $k \in \R^{\Lambda}$ such that the spectrum of $H(k)$ is simple. 

(ii) Lemma~\ref{lem:simple} follows from (i) by an analyticity argument.

The main difference to the argument in \cite{ARS15} is in step (i), mostly because we are considering unbounded operators here. To prove (i), we will use \eqref{eigenvalues}, i.e., that the eigenvalues of $H(k)$ are given by $\sum_j \gamma_j (2\alpha_j+1)$, $\alpha \in \N_0^{|\Lambda|}$, where $\gamma_j^2$, $j=1,\ldots,|\Lambda|$ are the eigenvalues of $h_{0}+k$. It therefore suffices to prove the existence of $k \in (0,\infty)^{\Lambda}$ such that the square roots of the eigenvalues of $h_{0}+k$ are rationally independent. In fact, we will prove the following:

Let $E_1(k) \le E_2(k) \le \ldots \le E_{|\Lambda|}(k)$ be the ordered eigenvalues of $h_{0}+k$, counted with multiplicity, and $\vec{E}:\R^{\Lambda} \to \R^{|\Lambda|}$ given by $\vec{E}(k) = (E_1(k),\ldots, E_{|\Lambda|}(k))$. Then the range of $\vec{E}$ contains a non-trivial open subset.

This implies (i) because the set of all $a=(a_j)_{1\le j \le |\Lambda|} \in (0,\infty)^{|\Lambda|}$ such that the vector $(\sqrt{a_j})_{1\le j \le |\Lambda|}$ is rationally independent is dense in $(0,\infty)^{|\Lambda|}$.

Choose $k^{(0)} \in (0,\infty)^{|\Lambda|}$ with components such that
\begin{equation}
0 < k_1^{(0)} < k_2^{(0)} < \ldots < k_{|\Lambda|}^{(0)}
\end{equation}
and $|k_{j+1}^{(0)} - k_j^{(0)}| > 2 \|h_0\| + C$ for all $j$ and a constant
\begin{equation} \label{Cbound}
C> \frac{144 \|h_0\| |\Lambda|^2}{\pi}
\end{equation}
(for reasons which will become clear at the end of the following calculation). Thus, e.g.\ by the variational principle for eigenvalues, the eigenvalues of $h_0+k^{(0)}$ are simple with $|E_j(k^{(0)}) - k_j^{(0)}| \le \|h_0\|$ and $|E_{j+1}(k^{(0)}) - E_j(k^{(0)})| > C$ for all $j$. They are analytic functions in $k$ for $k$ near $k^{(0)}$ and we will show that the Jacobian $\frac{\partial \vec{E}}{\partial k}$ is non-singular at $k=k^{(0)}$. Thus $\vec{E}$ is locally invertible near $k^{(0)}$ by the inverse function theorem, and thus its range contains an open neighborhood of $\vec{E}(k^{(0)})$.

By first order perturbation theory (Feynman-Hellmann) the Jacobian at $k^{(0)}$ has matrix elements
\begin{equation} \label{FeynHell}
\frac{\partial E_j}{\partial k_\ell}(k^{(0)}) = \langle v_j, I_{\ell} v_j \rangle = |v_j(\ell)|^2,
\end{equation}
where $v_j$ is a normalized eigenvector of $h_0+k^{(0)}$ to $E_j(k^{(0)})$ and $I_{\ell}$ the matrix with a single one in the $(\ell,\ell)$-th entry. The eigenvectors $v_j$ will be expressed through the orthogonal projections $P_j$ onto the eigenspace to $E_j(k^{(0)})$. The counterclockwise rectangular contour $\Gamma$ with vertices
\begin{equation}
k_j^{(0)}+\|h_0\|+\frac{C}{2} +i\frac{C}{2}, k_j^{(0)}-\|h_0\|-\frac{C}{2} +i\frac{C}{2}, k_j^{(0)}-\|h_0\|-\frac{C}{2} -i\frac{C}{2}, k_j^{(0)}+\|h_0\|+\frac{C}{2}-i\frac{C}{2}
\end{equation}
contains no other eigenvalues of $h_0+k^{(0)}$, so that
\begin{eqnarray} \label{Pjcont}
P_j & = & \frac{1}{2\pi i} \int_{\Gamma} (h_0+k^{(0)}-z)^{-1}\,dz \\
& = & \frac{1}{2\pi i} \int_{\Gamma} ((k^{(0)}-z)^{-1} - (k^{(0)}-z)^{-1} h_0 (h_0+k^{(0)}-z)^{-1}) \,dz \notag \\
& = & I_j - \frac{1}{2\pi i} \int_{\Gamma} (k^{(0)}-z)^{-1} h_0 (h_0+k^{(0)}-z)^{-1} \,dz. \notag
\end{eqnarray}
The norm of the integrand is bounded by $4\|h_0\| /C^2$ (using that $\Gamma$ has distance at least $C/2$ from all eigenvalues of $k^{(0)}$ and $h_0+k^{(0)}$), and the length of $\Gamma$ is bounded by $2C + 4\|h_0\| \le 6C$. Thus \eqref{Pjcont} implies that
\begin{equation}
\|P_j - I_j\| \le \frac{1}{2\pi} \cdot 6C \cdot \frac{4\|h_0\|}{C^2} \le \frac{12 \|h_0\|}{C\pi}.
\end{equation}
From this we conclude that the normalized eigenvectors of $h_0+k^{(0)}$ can be expressed as
\begin{equation}
v_j = \frac{P_j e_j}{\|P_j e_j\|} = \frac{e_j + r_j}{\|e_j+ r_j\|}
\end{equation}
with $\|r_j\| \le 12 \|h_0\|/(C\pi)$. Now a simple calculation shows
\begin{equation}
\|v_j-e_j\| \le \frac{48 \|h_0\|}{C\pi}.
\end{equation}
Inserting into \eqref{FeynHell} yields, recall \eqref{Cbound},
\begin{equation}
\left| \frac{\partial E_j}{\partial k_{\ell}} (k^{(0)}) - \delta_{\ell j} \right| \le \frac{96 \|h_0\|}{C\pi} + \left( \frac{48 \|h_0\|}{C\pi} \right)^2 \le \frac{144 \|h_0\|}{C\pi},
\end{equation}
so that
\begin{equation}
\left\| \frac{\partial \vec{E}}{\partial k}(k^{(0)}) - \idty \right\| \le \frac{144 \|h_0\| |\Lambda|^2}{C\pi} < 1.
\end{equation}
Thus the Jacobian at $k^{(0)}$ is invertible, which completes the proof of property (i).

To mimic the iterative analyticity argument from \cite{ARS15} for the proof of (ii), we observe that for each $x\in \Lambda$ and fixed numbers $k_y>0$, $y\not= x$, the eigenvalues of $H(k)$ can be labeled as functions $E_\alpha(k_x)$, $\alpha \in \N_0^\Lambda$, which are analytic in $k_x>0$. To see this note that by finite-dimensional analytic perturbation theory, for fixed $k_y$, $y\not= x$, the eigenvalues of $h_0+k>0$ can be labeled as analytic functions $\gamma_j^2(k_x)$, $j=1,\ldots, |\Lambda|$. Thus the same holds for their positive square roots $\gamma_j(k_x)$. Using \eqref{eigenvalues}, this gives the asserted labeling of the eigenvalues of $H(k)$ as $E_\alpha(k_x) = \sum_{j=1}^{|\Lambda|} \gamma_j(k_x)(2\alpha_j+1)$.

With this we can complete the proof of Lemma~\ref{lem:simple} with exactly the same iterative analyticity argument as in Step~2 of the proof of \cite[Lemma A2]{ARS15}: Let $k^{(0)} = (k_1^{(0)},\ldots,k_{|\Lambda|}^{0)})$ be as found in step (i). Then there is a nullset $N_1 \subset (0,\infty)$ such that  the eigenvalues of $H(k_1, k_2^{(0)},\ldots,k_{|\Lambda|}^{0)})$ are pairwise distinct for all $k_1 \not\in N_1$ (their eigenvalues $E_\alpha(k_1)$, $\alpha \in \N_0^\Lambda$, as functions of $k_1$ are pairwise distinct at $k_1=k_1^{(0)}$, so each of the countably many pairs $(E_\alpha (k_1), E_\beta (k_1))$, $\alpha \not= \beta$, can coincide for at most countably many values of $k_1$).

Fix any $k_1 \in (0,\infty)\setminus N_1$. We can now argue as above to get the existence of a set $N_2(k_1)$ such that for every $k_2 \in (0,\infty) \setminus N_2(k_1)$ all eigenvalues of $H(k_1,k_2, k_3^{(0)}, \ldots, k^{(0)}_{|\Lambda|})$ are distinct. By Fubini this means that $H(k_1,k_2, k_3^{(0)}, \ldots, k^{(0)}_{|\Lambda|})$ is simple for Lebesgue-a.e.\ $(k_1,k_2) \in (0,\infty)^2$. From here one proceeds iteratively to complete the proof of (ii).

\section{Energy density in the regime of localized excitations} \label{App:energydensity}

The goal of this appendix is to make more precise that the regime of localized excitations, i.e., the range of the spectral projection $P_{\I}$ of $H_\Lambda$ for which we have shown MBL properties in this work, contains states of positive energy density.

For simplicity, consider  cubes $\Lambda = \Lambda_L = [-L,L]^{\nu}$, fix a positive integer $\kappa$ and let $\alpha^{(\kappa)}$ be the occupation number vector in $\I$ with $\kappa$ excitations in all the sites of $S_{\lambda_0}$, i.e.,
\begin{equation} \label{maxenergy}
\alpha_j^{(\kappa)} = \left\{ \begin{array}{ll} \kappa, & \mbox{if $j\in S_{\lambda_0}$}, \\ 0, & \mbox{else}. \end{array} \right.
\end{equation}
The average energy density (per system size $|\Lambda|$) of $\psi_{\alpha^{(\kappa)}}$ can be explicitly characterized in terms of the density of states (DOS) $n(\lambda)$ of the infinite volume Anderson model $h=h_0+k$ on $\ell^2(\Z^{\nu})$:

\begin{proposition} \label{prop:energydensity}
Under the assumption \eqref{density} it holds that
\begin{equation} \label{energydensity}
\lim_{L\to \infty} \frac{1}{|\Lambda_L|} \E (E_{\alpha^{(\kappa)}} - E_0) = 2\kappa \int_0^{\lambda_0} n(\lambda) \lambda^{1/2}\,d\lambda.
\end{equation}
\end{proposition}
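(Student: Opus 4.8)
The plan is to reduce the claim to a convergence statement for the normalized trace of a function of the finite-volume Anderson model, and then invoke the almost-sure (and in-mean) convergence of the empirical eigenvalue distribution of $h_{0,\Lambda_L}+k$ to the integrated density of states. Concretely, from \eqref{eigenvalues} we have
\begin{equation*}
E_{\alpha^{(\kappa)}} - E_0 = \sum_{j=1}^{|\Lambda_L|} \gamma_j\big((2\alpha_j^{(\kappa)}+1)-1\big) = 2\kappa \sum_{j\in S_{\lambda_0}} \gamma_j = 2\kappa \sum_{j:\,\gamma_j^2\le\lambda_0} (\gamma_j^2)^{1/2},
\end{equation*}
so that, writing $g(\lambda) := \lambda^{1/2}\chi_{[0,\lambda_0]}(\lambda)$ and recalling that the $\gamma_j^2$ are the eigenvalues of $h_{\Lambda_L}=h_{0,\Lambda_L}+k$,
\begin{equation*}
E_{\alpha^{(\kappa)}}-E_0 = 2\kappa\, \Tr\big(g(h_{\Lambda_L})\big).
\end{equation*}
Thus the left side of \eqref{energydensity} equals $2\kappa \lim_{L\to\infty} \frac{1}{|\Lambda_L|}\E\big(\Tr\, g(h_{\Lambda_L})\big)$, and it remains to identify this limit with $\int_0^{\lambda_0} n(\lambda)\lambda^{1/2}\,d\lambda = \int_{\R} g(\lambda)\,dN(\lambda)$, where $N$ is the IDS of the infinite-volume operator $h$ and $n = N'$ its density (which exists and is bounded under \eqref{density}, e.g.\ by Wegner's estimate).

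The second step is to justify this identification. The standard theory of the IDS for ergodic Schr\"odinger operators (here the $\nu$-dimensional Anderson model with i.i.d.\ potential) gives that $\frac{1}{|\Lambda_L|}\Tr\, F(h_{\Lambda_L}) \to \int_{\R} F\,dN$ almost surely and in $L^1(d\mathbb P)$ for every bounded continuous $F$, and, by a standard approximation argument using boundedness of the density $n$, also for bounded $F$ with at most finitely many discontinuities (such as our $g$, whose only discontinuity is at $\lambda_0$, a set of zero $dN$-measure since $n$ is bounded). I would state this as a lemma with a reference to a standard source (e.g.\ the books of Carmona--Lacroix or Aizenman--Warzel), noting that the almost sure norm bound $\|h_{\Lambda_L}\| \le 4\nu + k_{max}$ uniform in $L$ makes the truncation to $[0,\lambda_0]$ harmless and supplies a uniform integrable bound $\frac{1}{|\Lambda_L|}\Tr\,g(h_{\Lambda_L}) \le \sqrt{\lambda_0}$, so that almost sure convergence upgrades to convergence in mean by dominated convergence. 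There is one mild technical point worth a sentence: the finite-volume operators $h_{0,\Lambda_L}$ are taken with simple (Dirichlet-type) boundary conditions coming from the graph Laplacian \eqref{graphLap}, and one should note that the IDS is independent of the choice among the standard boundary conditions, so using these boxes is legitimate.

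The main obstacle, such as it is, is purely one of exposition rather than mathematics: one must be slightly careful that the function $g(\lambda)=\lambda^{1/2}\chi_{[0,\lambda_0]}(\lambda)$ is not continuous, so the cleanest bounded-continuous version of the IDS convergence theorem does not apply verbatim. I would handle this by sandwiching $g$ between continuous functions $g_\eps^{\pm}$ agreeing with $g$ off an $\eps$-neighborhood of $\lambda_0$ and with $0\le g_\eps^-\le g\le g_\eps^+\le \sqrt{\lambda_0}$, applying the continuous version to each, and letting $\eps\to 0$ using $\int (g_\eps^+ - g_\eps^-)\,dN \le \sqrt{\lambda_0}\,\|n\|_\infty\cdot 2\eps \to 0$. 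Everything else is bookkeeping. Assembling the pieces: $\frac{1}{|\Lambda_L|}\E(E_{\alpha^{(\kappa)}}-E_0) = 2\kappa\,\frac{1}{|\Lambda_L|}\E\,\Tr\,g(h_{\Lambda_L}) \to 2\kappa\int_{\R} g\,dN = 2\kappa\int_0^{\lambda_0} n(\lambda)\lambda^{1/2}\,d\lambda$, which is \eqref{energydensity}.
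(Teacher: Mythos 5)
Your proposal is correct, but it takes a genuinely different route from the paper. You reduce the claim to the convergence of the normalized trace $|\Lambda_L|^{-1}\Tr\, g(h_{\Lambda_L})$ with $g(\lambda)=\lambda^{1/2}\chi_{[0,\lambda_0]}(\lambda)$ and then invoke the standard almost-sure weak convergence of the empirical spectral measure of the finite-volume Anderson model to $dN$, upgrading to convergence in mean via the uniform bound $|\Lambda_L|^{-1}\Tr\, g(h_{\Lambda_L})\le \lambda_0^{1/2}$ and handling the discontinuity of $g$ at $\lambda_0$ by a sandwich argument using the (Wegner-implied) Lipschitz continuity of $N$. The paper instead works directly with the eigenvalue counting functions: it discretizes $[0,\lambda_0]$ into $n$ subintervals, bounds the increments $N_\Lambda(j\lambda_0/n)-N_\Lambda((j-1)\lambda_0/n)$ from above and below by $|\Lambda|\bigl(N(j\lambda_0/n)-N((j-1)\lambda_0/n)\bigr)$ up to a boundary error $CL^{\nu-1}$ via Dirichlet--Neumann bracketing, see \eqref{Nbound}, \eqref{Dbound} and \eqref{DNbound}, and then couples the mesh $n\sim\sqrt{L}$ to the volume so that all error terms are $o(|\Lambda_L|)$; this yields matching $\liminf$ and $\limsup$ bounds with an explicit rate. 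Your argument is shorter and more conceptual, but it outsources the hard step to the almost-sure convergence theorem for the IDS, which in the cited references is itself proved from essentially the same bracketing/subadditivity input that the paper uses by hand; the paper's version is self-contained modulo those inequalities and quantitative. The only slip in your write-up is cosmetic: the restriction \eqref{graphLap} of the graph Laplacian to $\Lambda$ is the discrete analogue of \emph{Neumann}, not Dirichlet, boundary conditions, as noted in Appendix~\ref{App:energydensity} --- but since you correctly observe that the IDS is independent of the choice among standard boundary conditions, this mislabeling does not affect the argument.
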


Recall that $n(\lambda)=N'(\lambda)$, where the integrated density of states (IDS) is
\begin{equation}
N(\lambda) = \E (\langle \delta_0, \chi_{(-\infty,\lambda]}(h) \delta_0 \rangle).
\end{equation}
From \eqref{density} it follows that $N(\lambda)$ is absolutely continuous, e.g.\ \cite[Corollary~5.24]{Kirsch}, so that $n(\lambda)$ exists almost everywhere. In fact, if the density $\rho$ in \eqref{density} satisfies ess-$\inf_{k\in [0,k_{max}]} \rho(k) >0$,  then $n(\lambda)$ is strictly positive almost everywhere on the almost sure spectrum $[0, 4\nu+k_{max}]$ of $h$ and uniformly bounded away from zero on $(\delta, 4\nu+k_{max}-\delta)$ for every $\delta>0$, see \cite{HislopMueller}.

The therefore strictly positive quantity $2\kappa \int_0^{\lambda_0} n(\lambda) \lambda^{1/2}\,d\lambda$ takes the role of the averaged maximal energy density of the states where we have proven MBL properties in Theorems~\ref{thm:quasiloc} and \ref{thm:Weyl-Eigen-ExpDecay}. The fact that $\kappa$ can be any integer shows that there are many-body localized states with arbitrarily high energy density, but the bounds also show the price one has to pay for large $\kappa$.

It is not hard to guess \eqref{energydensity} from \eqref{eigenvalues}. The main issue we have to deal within its {\it proof} is that we need to relate the infinite volume IDS and DOS to the finite volume eigenvalue counting function. In finite volume we can use Dirichlet-Neumann bracketing, but some care is needed to get two-sided bounds.

Note that choosing $h_\Lambda =h_{0,\Lambda} + k$ via the graph Laplacian \eqref{graphLap} is the discrete analogue of using Neumann boundary conditions. In particular, this means that $h_\Lambda \le h_{\Lambda_1} \oplus h_{\Lambda_2}$ for any disjoint decomposition $\Lambda = \Lambda_1 \cup \Lambda_2$. This has the consequence that the infinite volume IDS is related to the finite volume counting function $N_\Lambda(\lambda) := \tr \chi_{(-\infty,\lambda)}(h_\Lambda) = |\{x: \gamma_x^2 < \lambda\}|$ by (e.g.\ \cite{Kirsch})
\begin{equation} \label{Nbound}
N_\Lambda(\lambda) \ge |\Lambda| N(\lambda).
\end{equation}
The correct choice of Dirichlet boundary conditions, from a quadratic form point of view, is to set $h_{0,\Lambda}^{D} := h_{0,\Lambda} + 2(2\nu - n_\Lambda)$, where $n_\Lambda(x):= |\{y\in \Lambda: \|x-y\|_1=1\}|$ is the degree function on the subgraph $\Lambda$ of $\Z^{\nu}$. For $h_\Lambda^{D} := h_{0,\Lambda}^{D} + k$ this leads to $h_{\Lambda}^{D} \ge h_{\Lambda_1}^{D} \oplus h_{\Lambda_2}^{D}$ for disjoint decompositions and consequently
\begin{equation} \label{Dbound}
N_\Lambda^{D}(\lambda) \le |\Lambda| N(\lambda)
\end{equation}
for the Dirichlet eigenvalue counting function $N_\Lambda^D(\lambda) = \tr \chi_{(-\infty,\lambda)}(h_{\Lambda}^{D})$, see \cite{Kirsch}.

Note that $2\nu-n_\Lambda$ is non-zero only on the boundary of $\Lambda$, so that the operators $h_\Lambda$ and $h_\Lambda^D$ differ by an operator of rank bounded by $CL^{\nu-1}$. This means that
\begin{equation} \label{DNbound}
N_\Lambda(\lambda) \le N_\Lambda^{D}(\lambda) + CL^{\nu-1}
\end{equation}
uniformly in $\lambda$.

We now proceed with the proof of \eqref{energydensity}.
Let $n$ be a positive integer, to be specified later. For $\alpha$ as in \eqref{maxenergy} we have by \eqref{eigenvalues} that
\begin{eqnarray} \label{firstone}
E_\alpha - E_0 & = & 2\kappa \sum_{x\,:\, \gamma_x^2 < \lambda_0} \gamma_x =  2\kappa \sum_{j=1}^n \sum_{x\,: \,\gamma_x^2 \in [(j-1)\lambda/n, j\lambda_0/n)} \gamma_x  \\
& \ge & 2\kappa \sum_{j=1}^n \left(N_\Lambda \left(\frac{j}{n}\lambda_0\right) - N_\Lambda\left(\frac{j-1}{n} \lambda_0\right)\right) \left(\frac{j-1}{n} \lambda_0\right)^{1/2}. \notag
\end{eqnarray}
For each $j$ we have by \eqref{DNbound}, \eqref{Nbound} and \eqref{Dbound} that
\begin{eqnarray} \label{Ndiffbound}
N_\Lambda \left(\frac{j}{n}\lambda_0\right) - N_\Lambda\left(\frac{j-1}{n} \lambda_0\right) & \ge & N_\Lambda\left( \frac{j}{n} \lambda_0\right) - N_\Lambda^{D} \left( \frac{j-1}{n} \lambda_0 \right) - C L^{\nu-1}  \\
& \ge & |\Lambda| \left(N\left(\frac{j}{n} \lambda_0 \right) - N\left(\frac{j-1}{n} \lambda_0 \right) \right) - CL^{\nu-1}. \notag
\end{eqnarray}
Note that the final expression in \eqref{Ndiffbound} is non random. Taking expectations in \eqref{firstone} and bounding $\sum_{j=1}^n ((j-1)\lambda_0/n)^{1/2} \le n \lambda_0^{1/2}$ gives
\begin{equation} \label{secondone}
\E(E_\alpha - E_0) \ge 2\kappa |\Lambda| \sum_{j=1}^n \left( N \left(\frac{j}{n}\lambda_0 \right) - N \left( \frac{j-1}{n} \lambda_0 \right) \right) \left( \frac{j-1}{n} \lambda_0 \right)^{1/2} - 2\kappa CL^{\nu-1} n \lambda_0^{1/2}.
\end{equation}
The first term on the right is a Riemann sum for the integral in \eqref{energydensity}. However, in order to not be doomed by the second term in \eqref{secondone}, we have to couple the $L\to \infty$ and $n\to\infty$ limits by now choosing $n$ to be the integer closest to $\sqrt{L}$. Then the second term in \eqref{secondone} is of order $L^{\nu- 1/2}$. The first term is equal to $2\kappa |\Lambda| \int_0^{\lambda_0} n(\lambda) \lambda^{1/2}\,d\lambda$ up to
\begin{equation} \label{thirdone}
2\kappa |\Lambda| \sum_{j=1}^n \int_{(j-1)\lambda_0/n}^{j\lambda_0/n} n(\lambda) \left( \left( \frac{j-1}{n} \lambda_0 \right)^{1/2} - \lambda^{1/2} \right)\,d\lambda.
\end{equation}
Using that $|((j-1)\lambda_0/n)^{1/2} - \lambda^{1/2}| \le (\lambda_0/n)^{1/2} \le C \lambda_0^{1/2} L^{-1/4}$ uniformly in $j$ and $\lambda \in [(j-1)\lambda_0/n, j\lambda_0/n]$, we see that \eqref{thirdone} is bounded in absolute value by $C|\kappa| \lambda_0^{1/2} N(\lambda_0) L^{\nu - 1/4}$, i.e., a term of order $L^{\nu-1/4}$.

As all perturbations which have appeared are of order lower than $|\Lambda| \sim L^\nu$, we therefore find
\begin{equation} \label{fourthone}
\liminf_{L\to\infty} \frac{\E(E_\alpha-E_0)}{|\Lambda|} \ge 2\kappa \int_0^{\lambda_0} n(\lambda) \lambda^{1/2}\,d\lambda.
\end{equation}

To get a corresponding upper bound on the $\limsup$, we start from
\begin{equation}
E_\alpha - E_0 \le  2\kappa \sum_{j=1}^n \left(N_\Lambda \left(\frac{j}{n}\lambda_0\right) - N_\Lambda\left(\frac{j-1}{n} \lambda_0\right)\right) \left(\frac{j}{n} \lambda_0\right)^{1/2}
\end{equation}
and then use
\begin{eqnarray}
N_\Lambda \left(\frac{j}{n}\lambda_0\right) - N_\Lambda\left(\frac{j-1}{n} \lambda_0\right) & \le & N_\Lambda^{D}\left( \frac{j}{n} \lambda_0\right) - N_\Lambda \left( \frac{j-1}{n} \lambda_0 \right) + CL^{\nu-1} \\
& \le & |\Lambda| \left(N\left(\frac{j}{n} \lambda_0 \right) - N\left(\frac{j-1}{n} \lambda_0 \right) \right) + CL^{\nu-1}. \notag
\end{eqnarray}
With this one proceeds essentially as above and arrives at
\begin{equation}
\limsup_{L\to\infty} \frac{\E(E_\alpha-E_0)}{|\Lambda|} \le 2\kappa \int_0^{\lambda_0} n(\lambda) \lambda^{1/2}\,d\lambda,
\end{equation}
completing the proof of \eqref{energydensity}.

\end{appendix}

%%%%%%%%%%%%%%%%%%%%%%%%%%
%%%%%%%%%%%%%%%%%%%%%%%%%%%%
%%%%%%%%%%%%%%%%%%%%%%%%%%%%%%

\end{document}